\documentclass[pra,floatfix,amsmath,superscriptaddress,onecolumn,nofootinbib]{revtex4-2}
\usepackage[T1]{fontenc} 
\usepackage{amssymb}
\usepackage{graphicx}
\usepackage{graphics}
\usepackage{amsmath}
\usepackage{amsthm}
\usepackage{color}
\usepackage{braket}
\usepackage[utf8]{inputenc}

\usepackage{dcolumn}
\usepackage{bm}
\usepackage{soul}

\newcommand{\bea}{\begin{eqnarray}}
\newcommand{\eea}{\end{eqnarray}}

\def\bi{\begin{itemize}}
\def\ei{\end{itemize}}
\def\bc{\begin{center}}
\def\ec{\end{center}}

\def\C{\hbox{$\mit I$\kern-.7em$\mit C$}}
\def\R{\hbox{$\mit I$\kern-.6em$\mit R$}}

\def\ket#1{|#1\rangle}

\def\ket#1{\left| #1\right>}

\def\c{\mathcal{C}}

\def\a{\ket{a}}
\def\b{\ket{b}}
\def\c{\ket{c}}
\def\d{\ket{d}}



\newtheorem{theorem}{Theorem}
\newtheorem{corollary}[theorem]{Corollary}
\newtheorem{lemma}[theorem]{Lemma}
\newtheorem{proposition}[theorem]{Proposition}
\newtheorem{definition}[theorem]{Definition}
\newtheorem{remark}[theorem]{Remark}
\newtheorem{example}{Example}

\begin{document}

\title{Entanglement gain in supercatalytic state transformations}
\author{Guillermo Díez-Pastor}\email{g.dpastor@upm.es}
\affiliation{Departamento de Matem\'atica Aplicada a la Ingeniería Industrial, Universidad Polit\'ecnica de Madrid, E-28006 Madrid, Spain}
\author{Julio I. de Vicente}\email{jdvicent@math.uc3m.es}
\affiliation{Departamento de Matem\'aticas, Universidad Carlos III de
Madrid, E-28911, Legan\'es (Madrid), Spain}
\affiliation{Instituto de Ciencias Matem\'aticas (ICMAT), E-28049 Madrid, Spain}

\begin{abstract}
Catalysis refers to the possibility of performing an otherwise impossible local state transformation by sharing an additional state, i.e.\ a catalyst, which is returned at the end of the protocol. There is a stronger version, known as supercatalysis, in which the borrowed catalyst is returned in an enhanced form, i.e.\ more entangled. However, this phenomenon has remained little explored. In this work we introduce the supercatalytic entanglement gain as a figure of merit taking values in $[0,1]$ that quantifies the performance of the protocol (with 0 corresponding to the standard case of catalysis and 1 representing the maximal possible gain) and we study in which cases it can be greater than zero and which strategies can maximize it. While it turns out that every catalytic transformation can be implemented in a supercatalytic fashion with entanglement gain equal to 1 if the state that is borrowed is chosen appropriately, other choices can make the gain strictly less than 1 and even 0. In fact, we prove that a large class of catalytic transformations are not fully supercatalyzable, i.e.\ there is at least one choice of catalyst for which the entanglement gain vanishes. On the other hand, the construction that shows that supercatalysis is always possible with maximal gain uses artificially highly entangled catalysts. For this reason, we also study minimal supercatalysis, where the entanglement content of the borrowed state is constrained in a precise and natural way. While we consider a scenario where we prove it is impossible to have entanglement gain equal to 1 in this case, we show that there exist minimal supercatalytic transformations with gain as close to 1 as desired. We also explore several examples and observe that, although choosing a catalyst with the least possible entanglement is often an optimal strategy for minimal supercatalysis, this is not necessarily always the case.
\end{abstract}

\maketitle

\section{Introduction}\label{sec1}

Entanglement is a purely quantum phenomenon with no classical analogue that is at the heart of the foundations of quantum mechanics. With the advent of quantum technologies that promise to drastically overcome the performance of their classical counterparts, in the last three decades entanglement theory has been developed with the aim to rigorously characterize its capabilities and limitations as a resource (see the reviews \cite{PlenioVirmani, losHorodecki}). The paradigm of state transformations under local operations and classical communication (LOCC) provides the cornerstone of this theory. First, this underpins the basic protocols with which this resource can be manipulated by spatially separated parties for information processing tasks. Second, and from a more abstract point of view, this makes it possible to operationally define an order in the set of entangled states and provides the basic axioms for the construction of entanglement measures. A crucial result in this context is Nielsen's theorem \cite{Nielsen}, which completely characterizes which LOCC transformations are possible between arbitrary pure bipartite states in terms of an easy-to-check majorization-type condition \cite{MarOlk}. Thanks to this theorem, Jonathan and Plenio identified in \cite{JP} the striking effect of entanglement catalysis, which plays a key role in the structure and applications of entanglement theory. Entanglement catalysis stands for the fact that there exist pairs of bipartite states such that one cannot be transformed into the other by LOCC but for which the conversion becomes possible if the parties share an additional entangled state that can be returned intact at the end of the protocol. This auxiliary state acts much like a catalyst in a chemical reaction. Its presence makes a transformation possible and, since it is not consumed, it can be reused.

The discovery of entanglement catalysis in \cite{JP} spurred considerable further research, and the amount of literature on the topic is vast. For instance, Refs.\ \cite{turgut,klimesh} characterize the pairs of states for which catalytic transformations are possible and \cite{DaftuarKlimesh,XiaoRunMin,gourcatalyst1,gourcatalyst2} investigate which states can act as catalysts. Furthermore, different variations and generalizations have been put forward such as mixed-state catalysis \cite{mixedcatalysis}, multipartite catalysis \cite{multipartitecatalysis}, probabilistic catalysis \cite{JP}, mutual catalysis \cite{FWX}, self-catalysis \cite{selfcatalysis}, approximate catalysis \cite{approximatecatalysis}, or correlated catalysis \cite{correlatedcatalysis}. Additionally, this phenomenon has been exhaustively studied beyond entanglement theory in the context of other quantum resource theories \cite{qrts}. Catalysis is relevant in the study of basic quantum information protocols like entanglement embezzlement \cite{embezzlement}, quantum state merging \cite{correlatedcatalysis}, entanglement distillation \cite{correlatedcatalysis}, quantum teleportation \cite{teleportation} and decoupling \cite{decoupling}. It can be used to enhance the capacity of noisy quantum channels \cite{catalysischannels} and as a basic ingredient in cryptographic protocols \cite{crypto1,crypto2}, and it plays a prominent role in the foundations of quantum thermodynamics \cite{approximatecatalysis,thermal}. For a more detailed presentation and a more complete list of references, we direct our readers to the recent and exhaustive review articles \cite{Datta} and \cite{Lipka}.

An extremely natural generalization of entanglement catalysis is entanglement supercatalysis, which is the subject of this work. Herein, instead of obtaining in the auxiliary system the same state at the end of the transformation protocol, one asks that the borrowed state is returned in an objectively improved form. In fact, in Jonathan and Plenio's seminal article on catalysis, the auxiliary system is referred to as an ``entanglement banker'' that is called Scrooge. Why would Scrooge settle with getting back only what he lent when the same transformation can be carried out in the main system allowing him to get an interest back in return? This possibility comes with obvious advantages because the borrowed state can not only be reused as in the standard catalytic scenario, but also generate an entanglement gain, thus augmenting its capacity to facilitate subsequent tasks. This problem is related to the notion of partial entanglement recovery \cite{Mori,BanRoyVatan,DuanFengYing}, where, contrary to supercatalysis, one considers a transformation that is indeed possible by LOCC alone and introduces an auxiliary system to try to recover part of the entanglement lost in the main system. Supercatalysis was introduced and its existence proved in \cite{BandRoy} (other examples can be found e.g.\ in \cite{FWX} as instances of mutual catalysis). In addition to this, this work provides an (inefficient) algorithm to look for catalysts and supercatalysts and provides instances of transformations where the use of certain catalysts does not allow any entanglement gain. This still leaves many questions unanswered, and, quoting \cite{Lipka} (see the last paragraph in Sec.\ III.A), supercatalysis has remained little studied ever since and it provides an interesting avenue for further research.    

In this work we take the point of view of Scrooge and seek for the best state that we can lend in order to maximize our entanglement profit in the returned state while at the same time allowing a given and otherwise impossible LOCC state transformation in the main system. For this, one needs a clear definition of what enhancing the entanglement in the auxiliary system means. Reference \cite{BandRoy} requires in principle that the entanglement entropy of the auxiliary system increases. Although the entanglement entropy is the most paradigmatic bipartite entanglement measure and has a clear operational interpretation in terms of entanglement distillation and concentration \cite{entanglemententropy}, it may increase while some other also relevant entanglement measure decreases. To make this notion universal and not subject to a particular task that a particular entanglement measure might be attached to, and in the same vein as the approach taken by \cite{DuanFengYing} in the context of partial entanglement recovery, we require this enhancement to be independent of the entanglement measure used. That is, we demand that the returned state can be converted by LOCC into the borrowed state (which is equivalent to the returned state having no less entanglement than the borrowed state according to \emph{any} entanglement measure). As already observed in \cite{BandRoy}, this implies that the transformation in the main system must be possible catalytically and that both the borrowed and the returned states must be themselves catalysts. In order to quantify the profit made by Scrooge we are, however, bound to a choice of entanglement measure and we use then the entanglement entropy. This allows us to introduce the entanglement gain relative to the transformation that wants to be accomplished, which is a function of the catalyst used. This parameter takes values in $[0,1]$, where 0 corresponds to no increase of entanglement (i.e.\ catalysis) and 1 to the maximum possible gain. We then study when transformations at entanglement gain strictly larger than 0 are possible and what the optimal strategies are for Scrooge to maximize this figure of merit.

The article is organized as follows. In Sec.\ II we provide all necessary background and definitions. In Sec.\ III.A we provide a very simple construction that shows that every catalytic transformation can be implemented supercatalytically at maximum entanglement gain if the catalyst is chosen wisely. The choice of catalyst that is lent is crucial, other options can lead to a smaller entanglement gain, even to the point of making it vanish, an example of which was already provided in \cite{BandRoy}, as already mentioned above. This leads us to introduce the notion of fully supercatalyzable transformation, which amounts to a state transformation that can be implemented supercatalytically with non-null entanglement gain for any choice of catalyst. In Sec.\ III.B we give several no-go results that prove that full supercatalysis is impossible for a large class of transformations, including in particular those for which the input and output states have the same Schmidt rank and those that can be implemented with a catalyst with minimal Schmidt rank (i.e.\ equal to 2). The construction of Sec.\ III.A is somewhat artificial and uses catalysts of unnecessarily large Schmidt rank. For this reason we consider in Sec.\ IV minimal supercatalysis, in which the lent catalyst is bound to the minimal necessary Schmidt rank and, hence, the observation of Sec.\ III.A no longer needs to hold. In fact, we show herein that in the simplest scenario where the input and output states have Schmidt rank less than or equal to 4 and have a catalyst with Schmidt rank equal to 2, then minimal supercatalysis at entanglement gain exactly equal to 1 is impossible. However, we prove that there exists a family of such supercatalytic transformations with entanglement gain as close to 1 as desired. This shows that a non-trivial universal bound (i.e.\ independent of the states involved) on the entanglement gain does not exist even in the minimal scenario. However, we do provide an upper bound dependent on the input and output states and the catalyst used and test its performance with several examples. These also show that lending the least entangled catalyst (what we call the miserly strategy) is often optimal in order to maximize the entanglement gain. However, we also find instances where this is not the case. These results show that Scrooge's optimal strategy has a complex dependence on the input and output states and suggest that determining it in general might be a formidable problem.  

\section{Preliminaries}\label{sec2}

\subsection{Background}

In this article we will always consider finite-dimensional bipartite quantum systems described by a pure state. That is, the system has a Hilbert space $\mathcal{H}$ associated to it such that $\dim\mathcal{H}<\infty$ and $\mathcal{H}=\mathcal{H}_A\otimes\mathcal{H}_B$, where $\mathcal{H}_A$ is the Hilbert space associated to one party, referred to as Alice, and $\mathcal{H}_B$ is the one associated to the other, referred to as Bob. The set of all pure states is then given by $S(\mathcal{H})=\{\ket{a} \in \mathcal{H} : ||\ket{a}||=1 \}$. LOCC transformations correspond to a particular class of completely-positive and trace-preserving maps that capture that only classical but not quantum information is exchanged between the parties (for a precise definition see e.g.\ \cite{locc1,locc2}). Local unitary transformations are reversible LOCC transformations and in the study of LOCC convertibility it is convenient to take any given state $\ket{a}\in S(\mathcal{H})$ to be in Schmidt form, i.e.\
\begin{equation}\label{schmidt}
    \ket{a} = \sum_{i=1}^d \sqrt{a_i} \ket{e_i} \ket{f_i},
\end{equation} 
where $d=\min\{\dim\mathcal{H}_A,\dim\mathcal{H}_B\}$,  $\{\ket{e_i}\}$ and $\{\ket{f_i}\}$ are orthonormal sets in respectively $\mathcal{H}_A$ and $\mathcal{H}_B$, and $\{a_i\}$, the so-called Schmidt coefficients, satisfy $a_i\geq0$ $\forall i$ and $\sum_ia_i=1$. Note that the Schmidt decomposition implies that without loss of generality we can take $\dim\mathcal{H}_A=\dim\mathcal{H}_B=d$. The set of Schmidt coefficients is obviously invariant under local unitary transformations and contains all the information about LOCC convertibility. We will denote the vector of Schmidt coefficients of $\ket{a}\in S(\mathcal{H})$ by $a$, which is an element of the polytope $\Delta_d = \{ v \in \mathbb{R}^d : v_i \geq 0 \ \forall i$, $\sum_i v_i = 1\}$ of probability distributions over $d$ outcomes.

Given two states $\ket{a},\ket{b}\in S(\mathcal{H})$, if there exists an LOCC protocol that transforms $\ket{a}$ into $\ket{b}$, we will write $\ket{a}\to\ket{b}$. The reason why we do not need here a proper definition of the class of LOCC maps is that Nielsen's theorem \cite{Nielsen} completely characterizes these transformations in our scenario with a clear-cut majorization condition formulated over the Schmidt coefficients \cite{MarOlk}. Namely, given $\ket{a},\ket{b}\in S(\mathcal{H})$ it holds that $\ket{a}\to\ket{b}$ if and only if (iff) $b\succ a$, i.e.\ $b$ majorizes $a$ \cite{MarOlk}. This means that $f_k(a)\leq f_k(b)$ for $k=1,2,\ldots,d$, where here and in the rest of the article, for any $v\in\Delta_d$, $f_k(v)$ stands for the sum of the $k$ largest entries of $v$. Entanglement measures for pure states $M:S(\mathcal{H})\to[0,\infty)$ must satisfy $M(\ket{a})\geq M(\ket{b})$ if $\ket{a}\to\ket{b}$ and they must vanish on non-entangled states (i.e.\ those that have a unique non-zero Schmidt coefficient). Among other things, Nielsen's theorem tells us that entanglement measures are in one-to-one correspondence with Schur-concave functions on $\Delta_d$ \cite{MarOlk}, i.e.\ an entanglement measure can be written as $M(\ket{a})=g(a)$ where $g:\Delta_d\to[0,\infty)$ is any (properly rescaled) function satisfying $g(a)\geq g(b)$ whenever $b\succ a$. As explained in the introduction, given its interpretation in the asymptotic scenario, the most widely used entanglement measure is the entanglement entropy, $E$, which corresponds to the choice of the Shannon entropy for the function $g$, that is
\begin{equation}
E(\ket{a})=-\sum_{i=1}^da_i\log a_i.
\end{equation}
It is relevant to point out that the Shannon entropy is not only Schur-concave but strictly Schur-concave \cite{MarOlk}. This means that if $\ket{a}\to\ket{b}$, then it must hold that $E(\ket{a})>E(\ket{b})$ unless $\ket{a}$ and $\ket{b}$ are local unitarily equivalent (i.e.\ unless $f_k(a)=f_k(b)$ $\forall k$, which is the same as saying that the Schmidt coefficients are equal up to reordering). In addition to the entanglement entropy, there are many other entanglement measures of relevance. In this work we will only need to consider the Schmidt rank \cite{schmidt}, which is given by the number of non-zero entries of the corresponding Schmidt vector and will be denoted by $SR(|a\rangle)$.  

In catalytic transformations, in addition to the main system we have an auxiliary system for the catalyst. This means that, while the global Hilbert space remains split by two parties, i.e.\ $\mathcal{H}=\mathcal{H}_A\otimes\mathcal{H}_B$, the local Hilbert spaces further decompose as $\mathcal{H}_{A} = \mathcal{H}_{A_1} \otimes \mathcal{H}_{A_2}$ and $\mathcal{H}_{B} = \mathcal{H}_{B_1} \otimes \mathcal{H}_{B_2}$. Thus, $\mathcal{H}_1= \mathcal{H}_{A_1} \otimes \mathcal{H}_{B_1}$ corresponds to the main system, while $\mathcal{H}_{2} = \mathcal{H}_{A_2} \otimes \mathcal{H}_{B_2}$ corresponds to the auxiliary system (see Fig.\ \ref{fig:dibujin}). Given $\ket{a},\ket{b}\in S(\mathcal{H}_1)$ and $\ket{c}\in S(\mathcal{H}_2)$, we say that $\ket{a}$ can be catalytically converted to $\ket{b}$ and we denote it by $\ket{a} \rightarrow_c \ket{b}$ if $\ket{a}\ket{c}\to\ket{b}\ket{c}$. The state $\ket{c}$ is referred to as catalyst. The surprising observation by Jonathan and Plenio is that it can hold that $\ket{a}\nrightarrow\ket{b}$ while $\ket{a} \rightarrow_c \ket{b}$. Mathematically, the existence of catalysis is a relatively simple consequence of majorization. The Schmidt vector of the state $\ket{a}\ket{c}\in S(\mathcal{H})$ happens to be given by the Kronecker product of the Schmidt vectors of $\ket{a}$ and $\ket{c}$, i.e.\ $a\otimes c$. Thus, the aforementioned phenomenon amounts to finding $a,b,c\in\Delta_d$ such that $b\nsucc a$ but $b\otimes c\succ a\otimes c$. Plenty of such examples can be found in the references given in the introductory section. 

\begin{figure}
\includegraphics[width=0.4\textwidth]{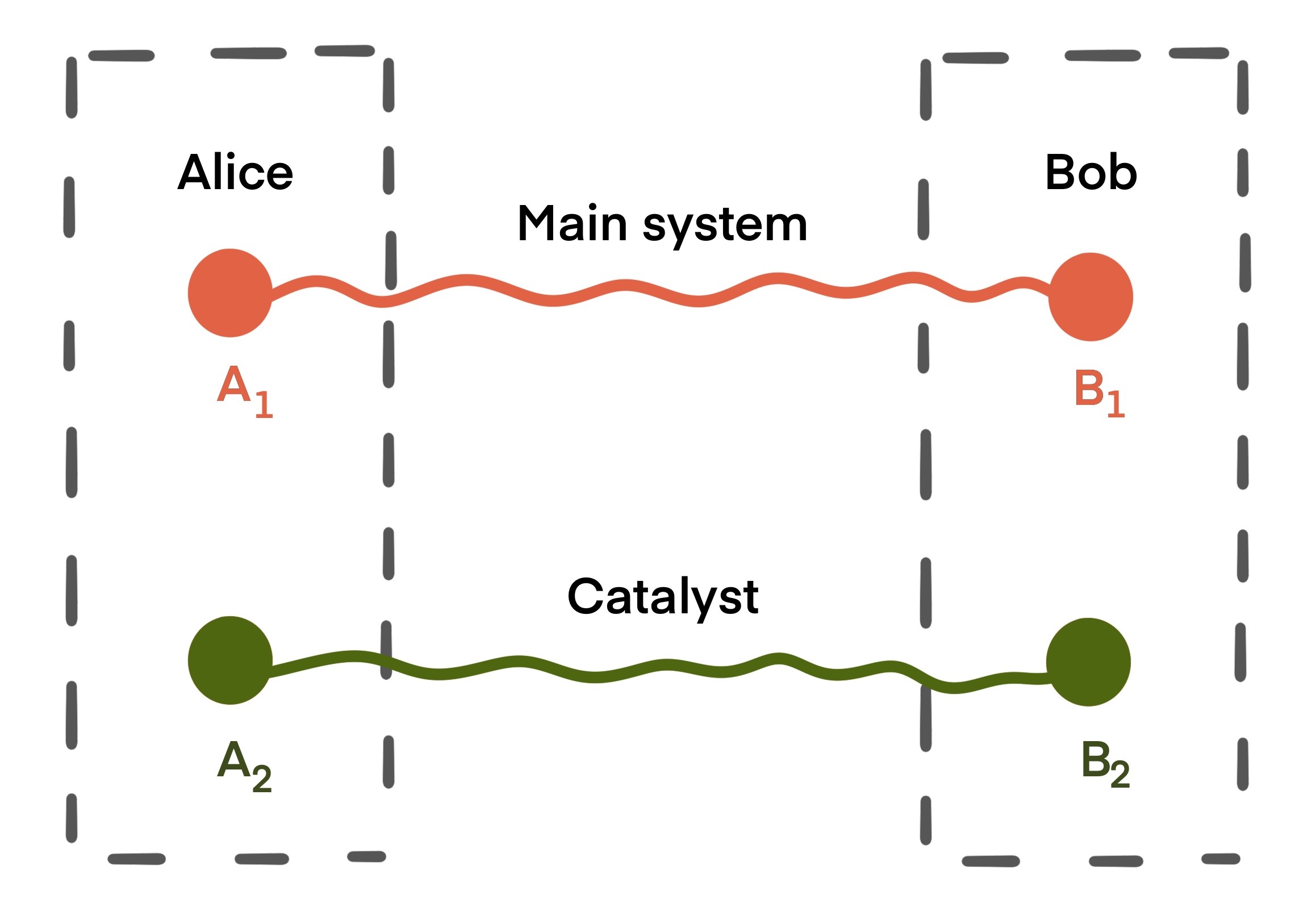}
\caption{\label{fig:dibujin} Alice and Bob share two entangled states: the main system, on which they want to perform a certain transformation, and an auxiliary system that acts as a catalyst for the transformation.}
\end{figure}

A comment is now in order for the notation that we will follow for the rest of the paper. Note that by reordering the elements of the Schmidt bases the Schmidt decomposition of any state as given in Eq.\ (\ref{schmidt}) can be taken without loss of generality such that the Schmidt coefficients are ordered non-increasingly. However, we have not imposed from the beginning that all Schmidt vectors have their entries ordered, as $a\otimes c$ is not ordered even if $a$ and $c$ are. From now on, we will consider without loss of generality that all Schmidt vectors of states in $S(\mathcal{H}_1)$ or in $S(\mathcal{H}_2)$ (i.e.\ of the main or the auxiliary system) are ordered. However, the joint Schmidt vector of the system and the catalyst, as given by $a\otimes c$, cannot be taken to be ordered, and, for this, we will write $(a\otimes c)^\downarrow$. Whenever we want to constrain the polytope of all Schmidt vectors to be ordered we use $\Delta'$, i.e.\
\begin{equation}
\Delta'_d = \{ v \in \mathbb{R}^d : v_i \geq 0\textrm{ and }v_{i} \ge v_{i+1}\, \forall i,\, \sum_i v_i = 1\}.
\end{equation}

Given a pair $\ket{a},\ket{b}\in S(\mathcal{H}_1)$ such that $\ket{a} \rightarrow_c \ket{b}$, we define
\begin{equation}
C(\ket{a},\ket{b})=\{\ket{c}\in S(\mathcal{H}_2):\ket{a}\ket{c}\to\ket{b}\ket{c}\}.
\end{equation}
That is, $C(\ket{a},\ket{b})$ is the set of all states that can act as catalysts in a given catalytic transformation. For $r\in\mathbb{N}$, we will also consider the sets
\begin{equation}
C_r(\ket{a},\ket{b})=\{\ket{c}\in S(\mathcal{H}_2):SR(\ket{c})\leq r, \ket{a}\ket{c}\to\ket{b}\ket{c}\}
\end{equation}
of catalysts for a given catalytic transformation with Schmidt rank not larger than $r$. Note that, if $\ket{a}\nrightarrow\ket{b}$ but $\ket{a} \rightarrow_c \ket{b}$, then all catalysts for this transformation must be entangled, i.e.\ $C_1(\ket{a},\ket{b})=\emptyset$. As already mentioned in the introductory section, there exist several works that investigate which states can act as catalysts for a given nontrivial catalytic transformation $\ket{a} \rightarrow_c \ket{b}$ (i.e.\ such that $\ket{a}\nrightarrow\ket{b}$). For instance, it is proved in \cite{JP} that the maximally entangled state of Schmidt rank equal to $r$, i.e.\ with Schmidt vector given by $(1/r,1/r,\ldots,1/r)$, can never act as catalyst. However, a full characterization of catalysts is only known in the simplest scenario in which the Schmidt ranks of the states involved take their minimal possible values \cite{XiaoRunMin}. This result will be used later and we state it in the following in precise terms. Before that, let us mention that it was proved in \cite{JP} that if $\ket{a}\nrightarrow\ket{b}$ but $\ket{a} \rightarrow_c \ket{b}$ with $SR(|a\rangle),SR(|b\rangle)\leq4$, then it must hold that
\begin{equation}\label{eq:thcatalyst0}
f_1(a)\leq f_1(b),\quad f_2(a)>f_2(b),\quad f_3(a)\leq f_3(b).
\end{equation}
\begin{theorem}[\cite{XiaoRunMin}]
\label{th:catalyst}
Let $\ket{a},\ket{b}\in S(\mathcal{H}_1)$ be such that $SR(|a\rangle),SR(|b\rangle)\leq4$ and $\ket{a}\nrightarrow\ket{b}$. Then, there exists a catalyst $\ket{c}\in C_2(\ket{a},\ket{b})$ with Schmidt vector $c = (x, 1-x)$ for $x \in [0.5, 1]$ iff Eq.~(\ref{eq:thcatalyst0}) holds and
\begin{equation}
x_{\min}(a,b):=\max \Bigl\{ \frac{a_1 + a_2 - b_1}{b_2 + b_3} \ , 1-\frac{a_4-b_4}{b_3-a_3}\Bigl\}\leq x\leq x_{\max}(a,b):=\min \Bigl\{ \frac{b_1}{a_1 + a_2} \ , \frac{b_1-a_1}{a_2 - b_2}  \ , 1-\frac{b_4}{a_3+a_4}\Bigl\}.
\end{equation}
\end{theorem}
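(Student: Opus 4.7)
The plan is to invoke Nielsen's theorem to recast the statement as a pure majorization question: a rank-$2$ auxiliary state with Schmidt vector $c=(x,1-x)$, $x\in[1/2,1]$, catalyses the transformation iff $b\otimes c\succ a\otimes c$, i.e.\ the seven partial-sum inequalities $f_k((b\otimes c)^\downarrow)\ge f_k((a\otimes c)^\downarrow)$ hold for $k=1,\ldots,7$. The necessary condition (\ref{eq:thcatalyst0}), already established in \cite{JP} under the assumptions $SR(\ket{a}),SR(\ket{b})\le 4$ and $\ket{a}\nrightarrow\ket{b}$, is taken as given and used throughout to absorb redundant inequalities.

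The core step is a case analysis of the sorted orders of $a\otimes c$ and $b\otimes c$. The entries of $v\otimes c$ are $\{xv_i,(1-x)v_i\}_{i=1}^{4}$ for $v\in\{a,b\}$; since $x\ge 1-x$ the only binary decisions in the sorting are whether $(1-x)v_i$ lies above or below $xv_{i+j}$, which partitions $[1/2,1]$ into finitely many subintervals. In each subinterval the seven partial sums become explicit affine functions of the $v_i$. The inequalities for $k=1$ and $k=7$ are immediate: $f_1$ reads $xa_1\le xb_1$, which is the first inequality of (\ref{eq:thcatalyst0}), and $f_7=1-(1-x)v_4$ yields $b_4\le a_4$, equivalent to $f_3(a)\le f_3(b)$.

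The non-trivial content lies in $k=2,3,4,5,6$, and in each case one isolates the ordering regime that produces the binding constraint. The $f_2$-inequality under the orderings $(xa_1,xa_2,\dots)$ for $a\otimes c$ and $(xb_1,(1-x)b_1,\dots)$ for $b\otimes c$ becomes $x(a_1+a_2)\le b_1$, giving $x\le b_1/(a_1+a_2)$; the $f_3$-inequality with both top pairs kept together, $(xv_1,(1-x)v_1,xv_2,\dots)$, yields $a_1+xa_2\le b_1+xb_2$, hence $x\le(b_1-a_1)/(a_2-b_2)$, where the denominator is positive by (\ref{eq:thcatalyst0}); the $f_4$-inequality with the $b$-side block broken as $(xb_1,(1-x)b_1,xb_2,xb_3,\dots)$ produces $b_1+x(b_2+b_3)\ge a_1+a_2$, i.e.\ $x\ge(a_1+a_2-b_1)/(b_2+b_3)$; the analogous analyses on the tail, for $f_5$ with the bottom pairs kept together and for $f_6$ with a mixed ordering in which the bottom block of $a$ is intact while that of $b$ is split, yield $x\ge 1-(a_4-b_4)/(b_3-a_3)$ and $x\le 1-b_4/(a_3+a_4)$ respectively. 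Intersecting the five constraints reproduces exactly $x_{\min}(a,b)\le x\le x_{\max}(a,b)$.

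The main technical obstacle is exhaustiveness: for every admissible ordering of $(a\otimes c)^\downarrow$ and $(b\otimes c)^\downarrow$ one must verify that no additional independent constraint on $x$ arises, and that all seven $f_k$-inequalities do hold throughout $[x_{\min},x_{\max}]$, with the ``extra'' ones absorbed into (\ref{eq:thcatalyst0}). This is a finite but tedious enumeration, made tractable precisely by the rank restrictions $SR(\ket{a}),SR(\ket{b})\le 4$ and $SR(\ket{c})\le 2$. The ``if'' direction then follows immediately: for any $x$ in the stated interval the same list of partial-sum inequalities is satisfied, so the resulting $\ket{c}$ catalyses the transformation.
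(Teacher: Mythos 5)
First, a point of reference: the paper does not prove Theorem~\ref{th:catalyst} at all --- it is imported verbatim from \cite{XiaoRunMin} as a known characterization, so there is no in-paper proof to compare against. Your strategy (Nielsen's theorem plus an exhaustive analysis of the sorted orders of $a\otimes c$ and $b\otimes c$) is the natural one and is essentially the method of the cited reference, and you correctly trace each of the five terms in $x_{\min}(a,b)$ and $x_{\max}(a,b)$ to the split of $f_k$ that produces it (e.g.\ $x(a_1+a_2)\leq b_1$ from $k=2$, $a_1+a_2\leq b_1+x(b_2+b_3)$ from $k=4$, and the mirror-image constraints on the tail from $k=5,6$).

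However, as written this is an outline rather than a proof, and the gap is exactly where the content of the theorem lies. Each $f_k(v\otimes c)$ is a maximum over the admissible splits $f_{k_1,k_2}(v\otimes c)$, so the condition $f_k(a\otimes c)\leq f_k(b\otimes c)$ is of the form $\max_i L_i(x)\leq\max_j R_j(x)$; extracting a clean necessary inequality from one chosen pair $(L_i,R_j)$ requires arguing which split attains the maximum on the $b$-side, and this depends on $x$ and on $b$. You do this implicitly for $k=2$ (where Eq.~(\ref{eq:thcatalyst0}) closes off the alternative route via $a_1+a_2>b_1+b_2$), but for $k=3,4,5,6$ the corresponding eliminations are only asserted. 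More importantly, the sufficiency direction is not ``immediate'': one must verify that for \emph{every} $x\in[x_{\min},x_{\max}]$ and \emph{every} split on the $a$-side there is a dominating split on the $b$-side, i.e.\ that all remaining pairwise inequalities are consequences of Eq.~(\ref{eq:thcatalyst0}) together with the five $x$-constraints. That finite but genuinely nontrivial enumeration (together with the handling of degenerate denominators such as $b_3=a_3$ or $a_2=b_2$) is the proof; acknowledging that it is ``tedious'' does not discharge it. To make the argument complete you would either have to carry out the enumeration explicitly or, as the paper does, simply cite \cite{XiaoRunMin}.
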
  
It was shown in \cite{JP} that, in a nontrivial catalytic transformation, it cannot hold that $\a$ and $\b$ have both Schmidt ranks equal to two or three. As a consequence, since the Schmidt rank cannot increase under LOCC, the only possible nontrivial catalytic transformations with $SR(|a\rangle),SR(|b\rangle)\leq4$ are those where $SR(|a\rangle)=4$ and either $SR(|b\rangle)=3$ or $SR(|b\rangle)=4$. 

\subsection{Supercatalysis}

We begin with a proper definition of supercatalysis as explained and motivated in Sec.\ \ref{sec1}.

\begin{definition}\label{def:supercatalysis}
Given $\ket{a},\ket{b}\in S(\mathcal{H}_1)$ such that $\ket{a}\nrightarrow\ket{b}$, we say that there is a \emph{supercatalytic transformation} from $\ket{a}$ into $\ket{b}$ if there exist $\ket{c},\ket{d}\in S(\mathcal{H}_2)$ such that $c\neq d$ and $\ket{a}\ket{c}\to\ket{b}\ket{d}$ and $\ket{d}\to\ket{c}$.
\end{definition}

Note that the condition $c\neq d$ is imposed in order to exclude the standard case of catalytic transformations. As already pointed out in \cite{BandRoy}, the very definition of supercatalysis that we use imposes certain relevant structure on the states involved in supercatalysis. We state again here this important property and provide as well its proof for the sake of completeness.

\begin{proposition}[\cite{BandRoy}]\label{mainobs}
If $\ket{a}\in S(\mathcal{H}_1)$ can be supercatalytically transformed into $\ket{b}\in S(\mathcal{H}_1)$ with states $\ket{c},\ket{d}\in S(\mathcal{H}_2)$ as given in Definition \ref{def:supercatalysis}, then $\ket{a} \rightarrow_c \ket{b}$ and both $\ket{c},\ket{d}\in C(\ket{a},\ket{b})$.
\end{proposition}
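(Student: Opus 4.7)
The plan is to exploit two elementary facts about LOCC convertibility: (i) it is transitive, i.e., $\ket{\psi_1}\to\ket{\psi_2}$ and $\ket{\psi_2}\to\ket{\psi_3}$ imply $\ket{\psi_1}\to\ket{\psi_3}$; and (ii) any LOCC transformation $\ket{d}\to\ket{c}$ on the auxiliary system $\mathcal{H}_2$ lifts to a transformation $\ket{x}\ket{d}\to\ket{x}\ket{c}$ on $\mathcal{H}_1\otimes\mathcal{H}_2$ for any $\ket{x}\in S(\mathcal{H}_1)$, since the underlying protocol can be run on the subfactors $\mathcal{H}_{A_2}$ and $\mathcal{H}_{B_2}$ while the identity is applied on $\mathcal{H}_{A_1}$ and $\mathcal{H}_{B_1}$, yielding a valid LOCC map in the bipartition $\mathcal{H}_A|\mathcal{H}_B$.

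With these two ingredients the proposition is immediate. To obtain $\ket{c}\in C(\ket{a},\ket{b})$, I would concatenate the given transformation $\ket{a}\ket{c}\to\ket{b}\ket{d}$ with the lifted transformation $\ket{b}\ket{d}\to\ket{b}\ket{c}$ coming from $\ket{d}\to\ket{c}$, getting $\ket{a}\ket{c}\to\ket{b}\ket{c}$. To obtain $\ket{d}\in C(\ket{a},\ket{b})$, I would instead first lift $\ket{d}\to\ket{c}$ to $\ket{a}\ket{d}\to\ket{a}\ket{c}$ and then compose with $\ket{a}\ket{c}\to\ket{b}\ket{d}$, getting $\ket{a}\ket{d}\to\ket{b}\ket{d}$. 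The statement $\ket{a}\rightarrow_c\ket{b}$ then follows from either chain together with the hypothesis $\ket{a}\nrightarrow\ket{b}$ built into Definition~\ref{def:supercatalysis}, so that the transformation is genuinely catalytic.

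There is no real obstacle here: the claim is a structural consequence of the definition, and the only mild point worth being explicit about is the lifting in~(ii), which is really just the observation that the class of LOCC maps is closed under tensoring with the identity on each party's unused register. As an alternative route, one could phrase the whole argument purely in majorization terms via Nielsen's theorem: the hypotheses translate to $c\succ d$ and $(b\otimes d)^\downarrow\succ (a\otimes c)^\downarrow$, and Kronecker-product monotonicity of majorization (if $u\succ v$ then $w\otimes u\succ w\otimes v$ for any probability vector $w$) combined with transitivity of $\succ$ yields $(b\otimes c)^\downarrow\succ (a\otimes c)^\downarrow$ and $(b\otimes d)^\downarrow\succ (a\otimes d)^\downarrow$, which are exactly the two catalytic conditions. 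The composition-based argument is, however, more transparent and makes no explicit use of Nielsen's theorem.
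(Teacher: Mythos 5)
Your argument is correct and is essentially the paper's own proof: both amount to the chain $\ket{a}\ket{d}\to\ket{a}\ket{c}\to\ket{b}\ket{d}\to\ket{b}\ket{c}$, where the first and last arrows are the lift of $\ket{d}\to\ket{c}$, and then reading off the two catalytic transformations by transitivity. Your extra remarks (making the lifting explicit and the alternative majorization phrasing) are sound but not needed.
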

\begin{proof}
Simply note that Definition \ref{def:supercatalysis} implies the following chain of LOCC transformations:
\begin{equation}
\ket{a}\ket{d} \rightarrow \ket{a}\ket{c} \rightarrow \ket{b}\ket{d} \rightarrow \ket{b}\ket{c},
\end{equation}
where the first and the last transformation follow from the fact that $\ket{d}\to\ket{c}$. Thus, it must hold that $\ket{a}\ket{d} \rightarrow \ket{b}\ket{d}$ and $\ket{a}\ket{c} \rightarrow \ket{b}\ket{c}$.
\end{proof} 

This observation tells us that supercatalytic transformations are only possible among catalytic transformations and that to search for Scrooge's best strategies regarding which state to lend and which state to ask for in return, one only needs to explore the structure of the set of all catalysts for a given transformation. However, for any given catalytic transformation, it is in principle not clear which choices of catalysts make supercatalysis possible, if any at all. For this reason, as a first step in the analysis of this phenomenon we introduce the following definitions.

\begin{definition}\label{supercatalyzable}
Given $\ket{a},\ket{b}\in S(\mathcal{H}_1)$ such that $\ket{a}\nrightarrow\ket{b}$ but $\ket{a} \rightarrow_c \ket{b}$, the catalytic transformation is said to be \emph{supercatalyzable} if there exists at least one choice of catalysts $\ket{c},\ket{d}\in C(\ket{a},\ket{b})$ such that $c\neq d$ and $\ket{a}\ket{c}\to\ket{b}\ket{d}$ and $\ket{d}\to\ket{c}$.
\end{definition} 

\begin{definition}\label{fullysupercatalyzable}
Given $\ket{a},\ket{b}\in S(\mathcal{H}_1)$ such that $\ket{a}\nrightarrow\ket{b}$ but $\ket{a} \rightarrow_c \ket{b}$, the catalytic transformation is said to be \emph{fully supercatalyzable} if for every catalyst $\ket{c}\in C(\ket{a},\ket{b})$, there exists another catalyst $\ket{d}\in C(\ket{a},\ket{b})$ such that $c\neq d$ so that $\ket{a}\ket{c}\to\ket{b}\ket{d}$ and $\ket{d}\to\ket{c}$ hold.
\end{definition}

Thus, a catalytic transformation that is supercatalyzable means that there exists at least one choice of loan state for Scrooge that allows him to upgrade it to supercatalysis and make some entanglement profit, while one that is fully supercatalyzable has the property that every potential loan state (i.e.\ every catalyst) makes in fact profit possible.

If a supercatalyzable catalytic transformation is fully supercatalyzable or, at least, it can be upgraded to supercatalysis by different choices of catalysts as loan states, then it is natural to study which choice performs better. To quantify Scrooge's profit, as explained in the introduction, we introduce the supercatalytic entanglement gain. For this, it is important to observe that the entanglement entropy is additive, i.e.\ for any $\ket{a}\in S(\mathcal{H}_1)$ and any $\ket{c}\in S(\mathcal{H}_2)$ it holds that $E(\ket{a}\ket{c})=E(\ket{a})+E(\ket{c})$. This, together with the aforementioned fact that Shannon entropy is strictly Schur-concave, implies that whenever $\ket{a}\nrightarrow\ket{b}$ but $\ket{a} \rightarrow_c \ket{b}$ it must hold that $E(\ket{a})>E(\ket{b})$ \footnote{Note that a nontrivial catalytic transformation can never be implemented by local unitary transformations.}. In turn, if a supercatalytic transformation $\ket{a}\ket{c}\to\ket{b}\ket{d}$ and $\ket{d}\to\ket{c}$ is possible, then we must have that $E(\ket{d})>E(\ket{c})$ (since $c\neq d$) and $E(\ket{d})-E(\ket{c})\leq E(\ket{a})-E(\ket{b})$. Thus, the best Scrooge can hope for is to increase the entanglement entropy of his system by the same amount it decreases in the main system. Moreover, the choice of state that is borrowed from Scrooge is conditioned on the particular catalytic transformation that needs to be implemented. Thus, it is only natural that we measure how good Scrooge's strategy is relative to the choice of input and output states for which the loan is given. Hence, given any $\ket{a},\ket{b}\in S(\mathcal{H}_1)$ such that $\ket{a}\nrightarrow\ket{b}$ and given $\ket{c},\ket{d}\in S(\mathcal{H}_2)$ such that $\ket{a}\ket{c}\to\ket{b}\ket{d}$ and $\ket{d}\to\ket{c}$, we define the \emph{relative supercatalytic entanglement gain}, or \emph{gain} for short, as
\begin{equation}\label{gain}
G(\a,\b,\c,\d) = \frac{E(\d)-E(\c)}{E(\a)-E(\b)}.
\end{equation}
It follows from the discussion above that it always holds that $G(\a,\b,\c,\d)\in[0,1]$. Note that the gain being zero corresponds to the case of standard catalysis due to the strict Schur-concavity of entropy, while $G(\a,\b,\c,\d)>0$ corresponds to supercatalysis. The gain is one in and only in the best possible case, in which Scrooge obtains all the entanglement that is lost in the main system. We will thus measure the quality of Scrooge's loan strategies according to this figure of merit for transformations that are potentially supercatalyzable, which, as Proposition \ref{mainobs} shows, are those that are implementable catalytically. Therefore, we define the following quantities, where in an abuse of notation we denote these different notions of gain by $G_{\max}$ with the understanding that the precise case will be clear by the number of input arguments.

\begin{definition}\label{def:maximalgainc}
Given $\ket{a},\ket{b}\in S(\mathcal{H}_1)$ such that $\ket{a}\nrightarrow\ket{b}$ and $\ket{c}\in C(\ket{a},\ket{b})$, we define the \emph{maximal gain given input and output states $\ket{a}$ and $\ket{b}$ and lent state $\c$} by
\begin{equation}
G_{\max}(\a,\b,\c)=\sup\{G(\a,\b,\c,\d):\ket{a}\ket{c}\to\ket{b}\ket{d}, \ket{d}\to\ket{c}\}.
\end{equation}
\end{definition} 

\begin{definition}\label{def:maximalgain}
Given $\ket{a},\ket{b}\in S(\mathcal{H}_1)$ such that $\ket{a}\nrightarrow\ket{b}$ and $\ket{a} \rightarrow_c \ket{b}$, we define the \emph{maximal gain given input and output states $\ket{a}$ and $\ket{b}$} by
\begin{equation}
G_{\max}(\a,\b)=\sup\{G_{\max}(\a,\b,\c):\ket{c}\in C(\ket{a},\ket{b})\}.
\end{equation}
\end{definition} 

Thus, a given catalytic transformation $\ket{a} \rightarrow_c \ket{b}$ is supercatalyzable iff $G_{\max}(\a,\b)>0$, and this number gives the largest gain Scrooge can have over all states he can lend and all states he can ask for in return enabling the desired transformation in the main system. A strategy that achieves $G_{\max}(\a,\b)$ (if it exists) is hence optimal for Scrooge. On the other hand, $G_{\max}(\a,\b,\c)$ quantifies the largest gain by Scrooge on a given choice of lent state, and a given catalytic transformation $\ket{a} \rightarrow_c \ket{b}$ is fully supercatalyzable iff $G_{\max}(\a,\b,\c)>0$ holds $\forall \c\in C(\ket{a},\ket{b})$.

For reasons that will be clearer later, we are also going to study supercatalytic transformations in the scenario that we refer to as \emph{minimal}, where we constrain the amount of entanglement in the auxiliary system.

\begin{definition}\label{def:minimal}
Given $\ket{a},\ket{b}\in S(\mathcal{H}_1)$ such that $\ket{a}\nrightarrow\ket{b}$ and $\ket{a} \rightarrow_c \ket{b}$, we define the \emph{maximal gain given input and output states $\ket{a}$ and $\ket{b}$ in the minimal scenario} by
\begin{equation}
\tilde{G}_{\max}(\a,\b)=\sup\{G_{\max}(\a,\b,\c):\ket{c}\in C_r(\ket{a},\ket{b})\neq\emptyset, C_s(\ket{a},\ket{b})=\emptyset\;\forall s<r\}.
\end{equation}
\end{definition} 

This means that in the minimal scenario we only consider strategies in which Scrooge lends states of the minimal possible Schmidt rank. Hence, whenever a catalytic transformation can be upgraded to a supercatalytic transformation with a borrowed state $\c$ fulfilling this condition we will say that the transformation is \emph{minimally supercatalyzable} and we will speak about \emph{minimal supercatalytic transformations}. 

It is important to note that the Schmidt rank is obviously multiplicative, i.e.\ $SR(\a\c)=SR(\a)SR(\c)$. This property together with the fact that the Schmidt rank is an entanglement measure (i.e.\ it cannot increase under LOCC transformations), automatically constrains the Schmidt rank of the possible states that Scrooge can get in return if the borrowed state has been fixed or if it must belong to a set of a given Schmidt rank (such as in the minimal scenario). We state this as a proposition since we will use this fact repeatedly throughout the manuscript.
\begin{proposition}\label{SR(d)bound}
Let $\ket{a},\ket{b}\in S(\mathcal{H}_1)$ such that $\ket{a}\nrightarrow\ket{b}$. If there exists a supercatalytic transformation from $\ket{a}$ into $\ket{b}$ with borrowed state $\ket{c}\in S(\mathcal{H}_2)$ and returned state $\ket{d}\in S(\mathcal{H}_2)$ (i.e.\ $\ket{a}\ket{c}\to\ket{b}\ket{d}$), then it must hold that
\begin{equation}
SR(\d)\leq\left\lfloor\frac{SR(\a)SR(\c)}{SR(\b)}\right\rfloor.
\end{equation}
\end{proposition}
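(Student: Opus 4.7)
The plan is to exploit exactly the two stated properties of the Schmidt rank that are recalled just before the proposition: its multiplicativity under tensor products and its monotonicity under LOCC. No deeper structural result about the states or catalysts is needed; the statement is essentially a bookkeeping inequality.

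First I would write $SR(\ket{a}\ket{c})=SR(\ket{a})SR(\ket{c})$ and $SR(\ket{b}\ket{d})=SR(\ket{b})SR(\ket{d})$ using multiplicativity of the Schmidt rank on product states. Then, since by hypothesis $\ket{a}\ket{c}\to\ket{b}\ket{d}$ via an LOCC protocol, and the Schmidt rank is an entanglement measure (hence monotone nonincreasing under LOCC), I would conclude
\begin{equation}
SR(\ket{b})\,SR(\ket{d})=SR(\ket{b}\ket{d})\leq SR(\ket{a}\ket{c})=SR(\ket{a})\,SR(\ket{c}).
\end{equation}
Since $\ket{b}\in S(\mathcal{H}_1)$ is a fixed state with nonzero Schmidt rank, I can divide by $SR(\ket{b})$ to get $SR(\ket{d})\leq SR(\ket{a})SR(\ket{c})/SR(\ket{b})$. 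Finally, because $SR(\ket{d})$ is a nonnegative integer, the inequality can be tightened to the floor, yielding the claimed bound.

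There is no genuine obstacle: the only thing to double check is that $SR(\ket{b})\neq 0$ (which is automatic for any state in $S(\mathcal{H}_1)$, as states have norm one and hence at least one nonzero Schmidt coefficient) and that the LOCC monotonicity of the Schmidt rank is being applied to the joint state on the bipartition $\mathcal{H}_A\otimes\mathcal{H}_B$, which is consistent with the protocol being LOCC across precisely this cut. The hypothesis $\ket{a}\nrightarrow\ket{b}$ plays no role in the bound itself; it merely places us in the supercatalytic regime of interest, so I would not need to invoke it in the argument.
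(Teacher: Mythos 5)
Your proof is correct and is exactly the argument the paper intends: the paper does not even write out a separate proof, but justifies the proposition in the preceding sentences by the same two facts you use, multiplicativity of the Schmidt rank under tensor products and its monotonicity under LOCC, with the floor following from integrality of $SR(\ket{d})$. Nothing is missing and no further comparison is needed.
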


\section{Existence of supercatalyzable and fully supercatalyzable transformations}\label{sec3}

\subsection{Supercatalyzable transformations}\label{sec3a}

We begin with a simple observation that shows that actually all catalytic transformations are supercatalyzable at maximal gain.

\begin{proposition}\label{th:allsuper}
For every $\ket{a},\ket{b}\in S(\mathcal{H}_1)$ such that $\ket{a}\nrightarrow\ket{b}$ and $\ket{a} \rightarrow_c \ket{b}$ the corresponding transformation is always supercatalyzable. Furthermore, it always holds that $G_{\max}(\a,\b)=1$ and there always exists a choice of loan states for Scrooge that achieve this gain.
\end{proposition}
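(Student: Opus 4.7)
The plan is to construct explicitly, for any catalytic transformation $\ket{a}\to_c\ket{b}$, a pair of loan/return states $\ket{c},\ket{d}$ that realize supercatalysis at gain exactly equal to $1$. The key observation is that we have freedom to make the auxiliary Hilbert space $\mathcal{H}_2$ arbitrarily large, and that gain $1$ forces $E(\a\c)=E(\b\d)$, which, combined with $\a\c\to\b\d$ and the strict Schur-concavity of the Shannon entropy noted in Sec.~II.A, means that $\ket{a}\ket{c}$ and $\ket{b}\ket{d}$ must be local-unitarily equivalent. This suggests building $\ket{c}$ and $\ket{d}$ by ``swapping'' the roles of $\ket{a}$ and $\ket{b}$ inside the catalyst system.

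Concretely, I would proceed as follows. By the hypothesis $\ket{a}\to_c\ket{b}$, there exists some $\ket{e}\in S(\mathcal{H}_2)$ with $\ket{a}\ket{e}\to\ket{b}\ket{e}$. Choose the auxiliary system large enough to embed two copies of $\mathcal{H}_1$ tensored with the space of $\ket{e}$, and define
\begin{equation}
\ket{c}:=\ket{b}\ket{e},\qquad \ket{d}:=\ket{a}\ket{e},
\end{equation}
regarded as states in $S(\mathcal{H}_2)$ with the natural bipartition. Then $\ket{a}\ket{c}=\ket{a}\ket{b}\ket{e}$ and $\ket{b}\ket{d}=\ket{b}\ket{a}\ket{e}$ have identical Schmidt vectors (they differ only by a swap of two tensor factors on each side), so they are local-unitarily equivalent and in particular $\ket{a}\ket{c}\to\ket{b}\ket{d}$. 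The condition $\ket{d}\to\ket{c}$ is exactly the assumed catalytic transformation $\ket{a}\ket{e}\to\ket{b}\ket{e}$.

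It remains to check the two nondegeneracy conditions. First, $c\neq d$: since $\ket{a}\nrightarrow\ket{b}$, the Schmidt vectors of $\a$ and $\b$ are not permutations of each other (otherwise $\a$ and $\b$ would be locally unitarily equivalent and $\a\to\b$ would hold trivially), so $c=b\otimes e$ and $d=a\otimes e$ are not reorderings of each other either. Second, the gain is computed directly using the additivity of the entanglement entropy:
\begin{equation}
G(\a,\b,\c,\d)=\frac{E(\d)-E(\c)}{E(\a)-E(\b)}=\frac{\bigl(E(\a)+E(\ket{e})\bigr)-\bigl(E(\b)+E(\ket{e})\bigr)}{E(\a)-E(\b)}=1,
\end{equation}
where the denominator is strictly positive by the argument in Sec.~II.B. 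This yields $G_{\max}(\a,\b)\ge 1$, and since the universal upper bound $G\le 1$ was already established, equality follows together with the existence of an explicit optimal strategy.

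There is no real obstacle here beyond identifying the right construction: the delicate point is recognizing that, counter to what one might first expect, the state Scrooge should lend incorporates $\ket{b}$ (the output) rather than $\ket{a}$ (the input), and that the returned state incorporates $\ket{a}$. Once this swap is in place, all conditions are verified by elementary manipulations using only additivity of $E$ and the definition of a catalyst. The construction does, however, use an auxiliary space of large dimension and produces rather artificial catalysts, which is precisely what motivates the refinements discussed in Sec.~IV.
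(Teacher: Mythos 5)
Your construction is exactly the paper's: starting from an existing catalyst $\ket{e}$, you lend $\ket{b}\ket{e}$ and take back $\ket{a}\ket{e}$, so that $\ket{a}\ket{c}\to\ket{b}\ket{d}$ is a local swap and $\ket{d}\to\ket{c}$ is the assumed catalytic transformation, giving gain $1$ by additivity of $E$. The argument is correct and matches the paper's proof in all essentials.
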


\begin{proof}
Suppose $\c\in S(\mathcal{H}_2)$ has the property that $\c\in C(\a,\b)$. We can consider an enlarged realization of the auxiliary system such that $\mathcal{H}_2=\mathcal{H}'_2\otimes\mathcal{H}''_2$ with $\mathcal{H}'_2= \mathcal{H}'_{A_2} \otimes \mathcal{H}'_{B_2}$ and $\mathcal{H}''_{2} = \mathcal{H}''_{A_2} \otimes \mathcal{H}''_{B_2}$. Then, it is clear that if we take $\c\in S(\mathcal{H}'_2)$, any choice of state $\ket{\psi}\in S(\mathcal{H}''_2)$ fulfils that $\c\ket{\psi}\in C(\a,\b)$, as $b\otimes c\succ a\otimes c$ implies that $b\otimes c\otimes\psi\succ a\otimes c\otimes\psi$ for any $\psi\in\Delta_n$ for any $n\in\mathbb{N}$. Hence, it follows in particular that $\c\b\in C(\a,\b)$. Moreover, it is clear that $\a\c\b\to\b\c\a$, since $f_k(a\otimes c\otimes b)=f_k(b\otimes c\otimes a)$ $\forall k$; that $\c\b$ and $\c\a$ have different ordered Schmidt coefficients, since $\ket{a}\nrightarrow\ket{b}$; and that $\c\a\to\c\b$, since $\c\in C(\a,\b)$. Thus, $\a$ can be supercatalytically transformed into $\b$. In fact, the protocol only requires to locally swap the registers of $\mathcal{H}_{1}$ and $\mathcal{H}''_{2}$ and it is a local unitary transformation. It is obvious then that $G(\a,\b,\c\b,\c\a)=1$.
\end{proof}

This proposition completely closes all the questions we have posed in the previous introductory sections, as long as we have the ability to choose the auxiliary state at will. Moreover, at first sight, it might seem to render the study of optimal strategies trivial. Nevertheless, note that the construction used in the proof of Proposition \ref{th:allsuper} is rather contrived as the state lent by Scrooge already contains a copy of the main system's target state embedded in it. Moreover, the borrowed state contains an artificially large amount of entanglement while the transformation can be implemented in the standard catalytic scenario with much less of it. In many instances in life, unconstrained resources lead to an unbounded profit; however, it is usually the case that one aims at maximizing the profit under a limited amount of resources at their disposal. This restriction is inherent to the scenario that we are considering here. Entangled states are difficult to prepare and entanglement is always regarded as a precious resource. In practice we might need to assume that Scrooge does not have perfect control on the state he can prepare and/or that the auxiliary system only operates under a limited amount of entanglement. This leads then to the study of supercatalysis and maximal gain under restricted classes of states that can be borrowed. The minimal scenario introduced at the end of the previous section (see\ Definition \ref{def:minimal}) arises naturally in this context. If catalytic transformations occur by giving the systems a physical realization of a certain dimensionality, i.e.\ that has access to a certain number of quantum levels, it seems quite reasonable that we are constrained to use the same physical implementation to achieve supercatalysis.

In Sec.\ \ref{sec4} we will study supercatalytic transformations with a constrained borrowed state and, in particular, in the minimal scenario. However, it  might be in order to point out here that, contrary to the unconstrained case as given in Proposition \ref{th:allsuper}, not all catalytic transformations are minimally supercatalyzable. This follows from Theorem 3 in \cite{BandRoy}. Using it, this reference provides a particular example of states $\a$ and $\b$ such that $SR(\a)=5$, $SR(\b)=4$ and $C_2(\a,\b)\neq\emptyset$ for which supercatalysis cannot occur if $SR(\c)=SR(\d)=2$. Furthermore, due to Proposition \ref{SR(d)bound}, in this case it cannot hold either that $SR(\d)>2$ when $SR(\c)=2$. From this we then conclude that in this case $G_{\max}(\a,\b,\c)=0 \;\; \forall\c\in C_2(\a,\b)$ and, consequently, $\tilde{G}_{\max}(\a,\b)=0$.

\subsection{Fully supercatalyzable transformations}\label{sec3b}

We move now to the study of fully supercatalyzable transformations. The example of \cite{BandRoy} with which we finished the previous section to argue that not all catalytic transformations are minimally supercatalyzable obviously implies in particular that not all catalytic transformations are fully supercatalyzable. We have studied this question in further generality and we have not found any instances of fully supercatalyzable transformations. In fact, our main contribution in this section are several no-go results on the impossibility of full supercatalysis for very general classes of catalytic transformations. While this does not exclude that some particular transformations might be fully supercatalyzable, this shows that in general Scrooge's choice of lent state plays a drastic role in the advantage he can obtain. This is not only because he does not obtain any at all for certain choices of catalysts, but also because, due to continuity, his gain remains close to zero if he lends states in their vicinity. We state precisely in the following the aforementioned results and we conclude this section with the corresponding proofs.

\begin{theorem}\label{th:nofull1}
Let $\ket{a},\ket{b}\in S(\mathcal{H}_1)$ such that $\ket{a}\nrightarrow\ket{b}$ and $\ket{a} \rightarrow_c \ket{b}$. If $C_r(\a,\b)\neq\emptyset$ with 
\begin{equation}
SR(\b)-r(SR(\a)-SR(\b))>0,
\end{equation}
then the transformation from $\a$ to $\b$ is not fully supercatalyzable.
\end{theorem}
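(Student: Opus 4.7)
The plan is to exhibit a concrete catalyst $\c^* \in C(\a,\b)$ that admits no supercatalytic upgrade, thereby witnessing that the transformation is not fully supercatalyzable. The natural choice is a catalyst of maximum entanglement entropy within the restricted set $C_r(\a,\b)$. First I would argue existence of such a maximizer: restricting attention to Schmidt vectors in $\Delta'_r$ (padding with zeros when necessary), Nielsen's theorem translates the catalyst condition $\a\c\to\b\c$ into the majorization $f_k((b\otimes c)^\downarrow) \geq f_k((a\otimes c)^\downarrow)$ for all $k$. Since each $f_k$ is continuous in the entries of $c$, this carves out a closed subset of the compact simplex $\Delta'_r$, and by hypothesis it is nonempty. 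The continuous function $E$ then attains its supremum on this compact set at some $\c^*$.

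Next I would unpack the hypothesis. Writing $s = SR(\a)$ and $t = SR(\b)$, a direct rearrangement shows that $t - r(s-t) > 0$ is equivalent to $rs/t < r+1$, hence $\lfloor rs/t \rfloor \leq r$. Now suppose, for contradiction, that there exist $\d \neq \c^*$ with $\d\to\c^*$ and $\a\c^*\to\b\d$. By Proposition \ref{mainobs}, $\d \in C(\a,\b)$. By Proposition \ref{SR(d)bound},
\begin{equation}
SR(\d) \leq \left\lfloor\frac{s\cdot SR(\c^*)}{t}\right\rfloor \leq \left\lfloor\frac{sr}{t}\right\rfloor \leq r,
\end{equation}
where the second inequality uses $SR(\c^*) \leq r$. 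Therefore $\d \in C_r(\a,\b)$.

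The contradiction comes from entropy. Since $\d\to\c^*$ we have $c^* \succ d$, and since $d \neq c^*$ as ordered Schmidt vectors the strict Schur-concavity of the Shannon entropy forces $E(\d) > E(\c^*)$. But $\d \in C_r(\a,\b)$, and $\c^*$ was chosen to maximize $E$ on this set, so $E(\d) \leq E(\c^*)$, a contradiction. Hence no supercatalytic upgrade of $\c^*$ exists, and the transformation fails to be fully supercatalyzable.

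The only genuine subtlety is the attainment of the maximum, which is handled cleanly by the compactness argument above; once this is in place, the proof hinges on the observation that the arithmetic condition in the hypothesis is precisely what is needed to trap any candidate return state $\d$ inside the same set $C_r$ where $\c^*$ is already extremal. Note also that the argument actually exhibits $\c^*$ concretely as an entropy maximizer, which may be useful for sharper quantitative follow-ups.
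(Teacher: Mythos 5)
Your proposal is correct and follows essentially the same route as the paper: establish compactness of $C_r(\a,\b)$ to obtain a most entangled catalyst in that set, use the arithmetic hypothesis together with Proposition \ref{SR(d)bound} to force any returned state back into $C_r(\a,\b)$, and derive a contradiction from the strict Schur-concavity of the entropy. No gaps.
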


\begin{corollary}
Let $\ket{a},\ket{b}\in S(\mathcal{H}_1)$ such that $\ket{a}\nrightarrow\ket{b}$ and $\ket{a} \rightarrow_c \ket{b}$. If $SR(\a)=SR(\b)$, then the transformation from $\a$ to $\b$ is not fully supercatalyzable.
\end{corollary}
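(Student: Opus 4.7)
The plan is to derive this corollary as an immediate consequence of Theorem \ref{th:nofull1}. Since the transformation $\ket{a}\rightarrow_c\ket{b}$ is catalytic by hypothesis, there exists at least one catalyst $\ket{c_0}\in C(\ket{a},\ket{b})$. Setting $r_0:=SR(\ket{c_0})$ we immediately obtain $C_{r_0}(\ket{a},\ket{b})\neq\emptyset$, so the non-emptiness hypothesis of Theorem \ref{th:nofull1} is fulfilled for this particular value of $r$.

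It then only remains to check the displayed inequality of Theorem \ref{th:nofull1} at $r=r_0$. Under the assumption $SR(\ket{a})=SR(\ket{b})$, the quantity $SR(\ket{b})-r_0(SR(\ket{a})-SR(\ket{b}))$ collapses to simply $SR(\ket{b})$, which is strictly positive since every bipartite state has Schmidt rank at least one. Hence the inequality in the hypothesis of Theorem \ref{th:nofull1} also holds, and a direct application of the theorem yields that the transformation is not fully supercatalyzable.

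There is no real technical obstacle here: the corollary is a one-line specialization of Theorem \ref{th:nofull1} to the case when the Schmidt rank is preserved. Intuitively, when there is no room for Schmidt-rank reduction in the main system, the multiplicativity of the Schmidt rank (as used in Proposition \ref{SR(d)bound}) prevents Scrooge from ever receiving a state of strictly larger Schmidt rank than the one he lent, and this is precisely the rigidity that Theorem \ref{th:nofull1} exploits to rule out full supercatalysis.
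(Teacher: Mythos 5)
Your proof is correct and is exactly the intended derivation: the paper states the corollary as an immediate specialization of Theorem \ref{th:nofull1}, since when $SR(\a)=SR(\b)$ the condition $SR(\b)-r(SR(\a)-SR(\b))>0$ reduces to $SR(\b)>0$ for any $r$ with $C_r(\a,\b)\neq\emptyset$, and such an $r$ exists because the transformation is catalytic. Your closing remark about the rigidity coming from Proposition \ref{SR(d)bound} also matches the mechanism of the theorem's proof.
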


\begin{theorem}\label{th:nofull2}
Let $\ket{a},\ket{b}\in S(\mathcal{H}_1)$ such that $\ket{a}\nrightarrow\ket{b}$ and $\ket{a} \rightarrow_c \ket{b}$. If $C_2(\a,\b)\neq\emptyset$, then the transformation from $\a$ to $\b$ is not fully supercatalyzable.
\end{theorem}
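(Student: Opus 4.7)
My plan is to exhibit a single ``bad'' catalyst in $C(\a,\b)$, i.e.\ one admitting no valid returned state $\d$. The natural candidate, motivated by the requirement that the returned state be strictly more entangled than the lent one, is the most entangled rank-$2$ catalyst: parameterising rank-$2$ catalysts by $x\in[1/2,1)$, the set $X := \{x : (x,1-x)\in C_2(\a,\b)\}$ is non-empty by hypothesis and closed (cut out by finitely many piecewise-linear majorisation inequalities in $x$), so $x^\star := \min X$ is attained. I let $\ket{c^\star}$ be the state with Schmidt vector $c^\star = (x^\star,1-x^\star)$.

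Suppose for contradiction that $\d$ realises supercatalysis with $\ket{c^\star}$: $d\neq c^\star$, $\d\to\ket{c^\star}$, and $\a\ket{c^\star}\to\b\d$. Proposition \ref{mainobs} forces $\d\in C(\a,\b)$. Using Nielsen's theorem on $\d\to\ket{c^\star}$ together with $f_k(c^\star)=1$ for $k\geq 2$, the condition $c^\star\succ d$ simplifies to $d_1\leq x^\star$ (Schmidt coefficients of $\d$ ordered non-increasingly). If $SR(\d)\leq 2$ the contradiction is immediate: $\d$ must be entangled, so $SR(\d)=2$ and $d=(y,1-y)$ with $y\leq x^\star$; $y=x^\star$ gives $\d=\ket{c^\star}$, while $y<x^\star$ places $\d\in C_2(\a,\b)$ strictly below $x^\star$, contradicting minimality. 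Note that by Proposition \ref{SR(d)bound} the regime $3\,SR(\b)>2\,SR(\a)$ already forces $SR(\d)\leq 2$, so in that range Theorem~\ref{th:nofull2} is subsumed by Theorem~\ref{th:nofull1} with $r=2$.

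The substantive case is $SR(\d)\geq 3$, arising only when $SR(\a)\geq \tfrac{3}{2}\,SR(\b)$. I would attempt to reduce it to the previous case via the rank-$2$ shadow $\ket{\tilde d}$ with Schmidt vector $(d_1,1-d_1)$ (reordered if $d_1<1/2$): a short $f_k$ check yields $\tilde d\succ d$, hence $\d\to\ket{\tilde d}$ by LOCC, so $\a\ket{c^\star}\to\b\d\to\b\ket{\tilde d}$ by composition, and $\ket{\tilde d}\to\ket{c^\star}$ holds whenever $\tilde d_1\leq x^\star$. If in addition $\ket{\tilde d}\in C(\a,\b)$, the rank-$2$ analysis applied to $\ket{\tilde d}$ contradicts the minimality of $x^\star$; the residual sub-cases (in particular $d_1=x^\star$ with $SR(\d)\geq 3$, and $d_1<1-x^\star$ where the shadow is too entangled to be dominated by $c^\star$) have to be handled separately using the $f_1$ lower bound $d_1\geq a_1 x^\star/b_1$ coming from $b\otimes d\succ a\otimes c^\star$ together with strictness propagation through the tensor products.

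The main obstacle is showing that the rank-$2$ shadow $\ket{\tilde d}$ inherits catalyst-ness from $\d$. The map $\d\mapsto\ket{\tilde d}$ does not preserve $C(\a,\b)$ in general, since maximally entangled states---never catalysts by \cite{JP}---arise as shadows of non-catalyst vectors, so no purely structural implication from $\d\in C(\a,\b)$ is available. The route I would pursue is to combine the two majorisations $b\otimes c^\star\succ a\otimes c^\star$ (catalysis for $\ket{c^\star}$) and $b\otimes d\succ a\otimes c^\star$ (supercatalysis) together with the extremality of $x^\star$ to deduce $b\otimes\tilde d\succ a\otimes\tilde d$ directly. I expect this combinatorial step, likely requiring an interpolation argument between $c^\star$, $d$, and $\tilde d$, to be the technical heart of the proof.
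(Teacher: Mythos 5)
Your setup coincides with the paper's: borrow the most entangled Schmidt-rank-2 catalyst $\ket{c^\star}$ (whose existence you correctly justify by closedness of the constraint set; the paper's Lemma \ref{lemma:compact} does the same), dispose of $SR(\d)\leq 2$ by minimality, and reduce $SR(\d)\geq 3$ via the rank-2 shadow $\ket{\tilde d}$ with Schmidt vector $(d_1,1-d_1)$. One correction to your diagnosis, though: the step you single out as ``the main obstacle'' --- showing that $\ket{\tilde d}$ inherits membership in $C(\a,\b)$ --- is not an obstacle at all. You already have $\a\ket{c^\star}\to\b\d\to\b\ket{\tilde d}$ and $\ket{\tilde d}\to\ket{c^\star}$; if $\tilde d\neq c^\star$ this is itself a supercatalytic transformation with borrowed state $\ket{c^\star}$ and returned state $\ket{\tilde d}$, so Proposition \ref{mainobs} gives $\ket{\tilde d}\in C_2(\a,\b)$ for free, and $E(\ket{\tilde d})>E(\ket{c^\star})$ contradicts extremality. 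No interpolation between $c^\star$, $d$ and $\tilde d$ is needed. This is exactly the paper's Lemma \ref{lemma2:nofull2}, whose conclusion is that the shadow argument forces $d_1=c^\star_1$.

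The genuine gap is the residual case you flag but do not resolve: $d_1=x^\star$ with $SR(\d)\geq 3$. This is the technical heart of the theorem, and your proposed tools (the $f_1$ bound $d_1\geq a_1x^\star/b_1$ plus ``strictness propagation'') are not developed enough to close it and do not obviously suffice. The paper handles it as follows: first reduce to $SR(\d)=3$ (Lemma \ref{lemma1:nofull2}); then note that $d_1=c_1$ implies $\d\to\ket{c_\epsilon^{(3)}}$ with $c_\epsilon^{(3)}=(c_1,c_2-\epsilon,\epsilon)$, so $\a\ket{c^\star}\to\b\ket{c_\epsilon^{(3)}}$ for all small $\epsilon$ (Lemma \ref{lemma3:nofull2}); finally, exploit extremality quantitatively: since $(c_1-\epsilon,c_2+\epsilon)$ is \emph{not} a catalyst, some index $k$ satisfies $f_k(a\otimes c_\epsilon^{(2)})>f_k(b\otimes c_\epsilon^{(2)})$ for all small $\epsilon$, which by continuity forces the equality $f_k(a\otimes c)=f_k(b\otimes c)$. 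Comparing this with $f_k(b\otimes c_\epsilon^{(3)})$ shows the optimal partition for $f_k(b\otimes c)$ must be $(k,0)$, and a chain of inequalities then forces $\sum_{i=1}^{k_1}a_i=\sum_{i=1}^{k_1}b_i$ and $a_i=b_i=b_1$ for $i\leq k_2$, which together contradict the strict inequality at $c_\epsilon^{(2)}$ (Lemma \ref{lemma4:nofull2}). Without an argument of this kind --- or some substitute for it --- your proof establishes only that the returned state must have $d_1=x^\star$ and Schmidt rank at least 3, not that it cannot exist. As written, the proposal is an essentially correct reduction to the hard case, but the hard case remains open.
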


\subsubsection{Proof of Theorem \ref{th:nofull1}}

The idea of this proof is to first show that the set of all catalysts with bounded Schmidt rank for any given catalytic transformation is compact (which is also relevant in later discussions). This in particular implies that there always exists a notion of most entangled catalyst in $C_r(\a,\b)$ for any $r$ whenever this set is not empty. This then entails that supercatalysis is impossible if we fix this state to be the borrowed state, i.e.\ $\c$, if we furthermore restrict the returned state, i.e.\ $\d$, to fulfil $\d\in C_r(\a,\b)$ as well. Finally, if the Schmidt ranks of $\a$, $\b$ and $\c$ are chosen properly so that, by Proposition \ref{SR(d)bound}, it is impossible that the returned state fulfils $\d\notin C_r(\a,\b)$, we then obtain that the catalytic transformation is not supercatalyzable whenever the state $\c$ is lent.

\begin{lemma}
\label{lemma:compact}
Let $\ket{a},\ket{b}\in S(\mathcal{H}_1)$ such that $\ket{a} \rightarrow_c \ket{b}$. Then, for any $r\in\mathbb{N}$ it holds that $C_r(\a,\b)$ is compact. 
\end{lemma}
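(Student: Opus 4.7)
The plan is to reduce the claim to a standard topology argument: since $\dim\mathcal{H}_2<\infty$, the state space $S(\mathcal{H}_2)$ is the unit sphere of a finite-dimensional complex vector space and is therefore compact, so it suffices to exhibit $C_r(\a,\b)$ as a closed subset of $S(\mathcal{H}_2)$. First I would write $C_r(\a,\b)=X\cap Y$ with
\begin{equation}
X=\{\ket{c}\in S(\mathcal{H}_2):SR(\ket{c})\leq r\},\qquad Y=\{\ket{c}\in S(\mathcal{H}_2):\ket{a}\ket{c}\to\ket{b}\ket{c}\},
\end{equation}
and then verify that each piece is closed by a separate continuity argument.

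For $X$, I would fix orthonormal bases of $\mathcal{H}_{A_2}$ and $\mathcal{H}_{B_2}$ and identify any $\ket{c}$ with the matrix $C$ of its expansion coefficients, so that $SR(\ket{c})=\mathrm{rank}(C)$. The condition $\mathrm{rank}(C)\leq r$ is equivalent to the simultaneous vanishing of all $(r+1)\times(r+1)$ minors of $C$, which are polynomial (and hence continuous) functions of the components of $\ket{c}$. This exhibits $X$ as the intersection of $S(\mathcal{H}_2)$ with a closed algebraic variety in the ambient vector space, and is therefore closed.

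For $Y$, I would invoke Nielsen's theorem to rewrite the condition $\ket{a}\ket{c}\to\ket{b}\ket{c}$ as the finite system of majorization inequalities $f_k\bigl((b\otimes c)^\downarrow\bigr)\geq f_k\bigl((a\otimes c)^\downarrow\bigr)$ for $k=1,\ldots,d\,\dim\mathcal{H}_2$. The map sending a state to its (sorted) vector of Schmidt coefficients is continuous, since the Schmidt coefficients are precisely the singular values of $C$ and singular values depend continuously on the matrix entries (via, e.g., Weyl's perturbation inequality). Moreover, for each fixed probability vector $v$, the map $w\mapsto f_k\bigl((v\otimes w)^\downarrow\bigr)$ is continuous because $f_k$ is the maximum of a finite family of linear functionals (selecting $k$ coordinates at a time). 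Consequently every one of the finitely many majorization inequalities carves out a closed subset of $S(\mathcal{H}_2)$, and $Y$ is the intersection of these closed sets.

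The only delicate point I anticipate is the continuity of the sorted Schmidt vector as a function of the state, but this is a standard consequence of the continuity of singular values under perturbations of matrix entries. Combining the closedness of $X$ and $Y$ with the compactness of $S(\mathcal{H}_2)$ yields compactness of $C_r(\a,\b)$, as required.
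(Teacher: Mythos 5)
Your proof is correct and follows essentially the same route as the paper's: reduce compactness to closedness plus boundedness in a finite-dimensional space, and obtain closedness from the continuity of the finitely many majorization functionals $c\mapsto f_k(b\otimes c)-f_k(a\otimes c)$ supplied by Nielsen's theorem. The only cosmetic difference is that the paper sidesteps your set $X$ by realizing $C_r(\a,\b)$ inside $S(\mathcal{H}_2)$ with $\dim\mathcal{H}_{A_2}=\dim\mathcal{H}_{B_2}=r$ (so the rank constraint is automatic) and argues via openness of the complement in $\Delta'_r$, whereas you handle the Schmidt-rank condition explicitly through the vanishing of minors and justify the continuity of the sorted Schmidt-coefficient map, which the paper simply asserts.
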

\begin{proof}
Given $\ket{a},\ket{b}\in S(\mathcal{H}_1)$ with $\dim\mathcal{H}_{A_1}=\dim\mathcal{H}_{B_1}=d$ and $r\in\mathbb{N}$, we can regard $C_r(\a,\b)\subset S(\mathcal{H}_2)$ with $\dim\mathcal{H}_{A_2}=\dim\mathcal{H}_{B_2}=r$. Therefore, to prove the claim it is enough to show that this set is closed and bounded. The latter property obviously holds, so we only have to verify the first one. In order to do so, consider the manifestly continuous map $s:S(\mathcal{H}_2)\to\Delta'_r$ that assigns to every state the vector of its ordered Schmidt coefficients. Since the preimage of a closed set under a continuous map must be closed, it is therefore sufficient to prove that $s(C_r(\a,\b))$ is closed, i.e.\ that $\Delta'_r \backslash s(C_r(\a,\b))$ is open in $\Delta'_r$. In the following we can assume that $s(C_r(\a,\b))\neq\emptyset$ and $s(C_r(\a,\b))\neq\Delta'_r$ as the claim is otherwise trivially true. Hence, take any $c\in \Delta'_r \backslash s(C_r(\a,\b))$, by Nielsen's theorem there must exist some value of $k\in\{1,2,\ldots,dr-1\}$ such that  
\begin{equation}
f_k (a \otimes c) > f_k (b \otimes c).
\end{equation}
Since $f_k$ is a continuous function, it follows that $\exists\epsilon>0$ such that $\forall c_\epsilon\in\Delta'_r$ for which $||c-c_\epsilon||<\epsilon$, it then holds that $f_k(a\otimes c_\epsilon)>f_k(b\otimes c_\epsilon)$, and, consequently, that $c_\epsilon\in \Delta'_r \backslash s(C_r(\a,\b))$.
\end{proof}

\begin{definition}
Given $\ket{a},\ket{b}\in S(\mathcal{H}_1)$ such that $\ket{a} \rightarrow_c \ket{b}$ and $r\in\mathbb{N}$ such that $C_r(\a,\b)\neq\emptyset$, we define
\begin{equation}
E_r(\a,\b)=\max\{E(\c):\c\in C_r(\a,\b)\}.
\end{equation}
\end{definition}

Note that we write $\max$ instead of $\sup$ because we are optimizing a continuous function -- the entanglement entropy -- over a compact set, $C_r(\a,\b)$, as per Lemma \ref{lemma:compact}. Thus, for any pair $\ket{a},\ket{b}\in S(\mathcal{H}_1)$ and $r\in\mathbb{N}$ such that $C_r(\a,\b)\neq\emptyset$, this guarantees that there exists at least one choice of $\c\in C_r(\a,\b)$ such that $E_r(\a,\b)=E(\c)$. We refer to any state with this property as \emph{a most entangled catalyst of Schmidt rank less than or equal to $r$ for the transformation} $\ket{a} \rightarrow_c \ket{b}$. For $r=2$, given that majorization induces a total order in this case, this state is unique. Moreover, thanks to Theorem \ref{th:catalyst}, if in addition it holds that $SR(|a\rangle),SR(|b\rangle)\leq4$, then the vector of Schmidt coefficients of the most entangled catalyst of Schmidt rank equal to 2 is given by $(x_{\min}(a,b),1-x_{\min}(a,b))$ in the notation used therein. By the same reasons, we can also speak about \emph{a least entangled catalyst of Schmidt rank less than or equal to $r$ for the transformation} $\ket{a} \rightarrow_c \ket{b}$. Analogously, when the premises of Theorem \ref{th:catalyst} hold, the Schmidt vector of the least entangled catalyst of Schmidt rank equal to 2 is given by $(x_{\max}(a,b),1-x_{\max}(a,b))$.

\begin{remark}\label{remark}
Note that, due to Proposition \ref{th:allsuper}, we can also write $\max$ instead of $\sup$ in the definition of the maximal gain given in Definition \ref{def:maximalgain}. Now we can see that the same holds for the other notions of maximal gain given in that section (Definition \ref{def:maximalgainc} and Definition \ref{def:minimal}). Note first that $G(\a,\b,\c,\d)$ is continuous in its domain due to the continuity of the entanglement entropy. By Proposition \ref{SR(d)bound}, in order to compute $G_{\max}(\a,\b,\c)$ given $\a$, $\b$ and $\c$ we can restrict the optimization to states $\d$ such that $SR(\d)\leq\lfloor SR(\a)SR(\c)/SR(\b)\rfloor$. Now, by the same reasoning used in the proof of Lemma \ref{lemma:compact}, one can readily see that, given $r\in\mathbb{N}$, the sets $\{\d \in S(\mathcal{H}_2): \ket{a}\ket{c}\to\ket{b}\ket{d}, SR(\d)\leq r\}$ and $\{\d\in S(\mathcal{H}_2): \ket{d}\to\ket{c}, SR(\d)\leq r\}$ are closed, and, hence, so is their intersection, which is moreover clearly bounded. Thus, in Definition \ref{def:maximalgainc} we are also optimizing a continuous function over a compact set. The case of the maximal gain in the minimal scenario (see\ Definition \ref{def:minimal}) follows immediately from Lemma \ref{lemma:compact} given that the Schmidt rank of the borrowed state $\c$ is bounded. 
\end{remark}

\begin{proof}[Proof of Theorem~\ref{th:nofull1}]
Note that the condition $SR(\b)-r(SR(\a)-SR(\b))>0$ implies
\begin{equation}
\frac{SR(\a)r}{SR(\b)}<r+1.
\end{equation}
By assumption, $C_r(\a,\b)\neq\emptyset$, and, then, if we lend any catalyst in this set to supercatalytically transform $\a$ into $\b$, the condition above together with Proposition \ref{SR(d)bound} mean that
\begin{equation}
SR(\d)\leq\left\lfloor\frac{SR(\a)SR(\c)}{SR(\b)}\right\rfloor \leq \left\lfloor\frac{SR(\a)r}{SR(\b)}\right\rfloor<r+1,
\end{equation}
i.e. the returned state $\d$ must satisfy $SR(\d)\leq r$. Hence, by Proposition \ref{mainobs}, $\d\in C_r(\a,\b)$. Let us then assume for a contradiction that a transformation fulfilling the premises of the theorem is fully supercatalyzable. Then, in particular, supercatalysis is possible choosing the lent state $\c$ to be a most entangled catalyst of Schmidt rank less than or equal to $r$ for the transformation $\ket{a} \rightarrow_c \ket{b}$. This then means that there exists $\d\in C_r(\a,\b)$ such that $\ket{a}\ket{c}\to\ket{b}\ket{d}$ and $\ket{d}\to\ket{c}$ with $c\neq d$. By the strict Schur-concavity of the entropy the latter condition implies that $E(\d)>E(\c)=E_r(\a,\b)$, which is a contradiction.
\end{proof}

\subsubsection{Proof of Theorem \ref{th:nofull2}}

The proof of this theorem uses similar ideas to what we have used in the proof of Theorem \ref{th:nofull1}. Due to Lemma \ref{lemma:compact}, if $C_2(\a,\b)\neq\emptyset$, then full supercatalysis is only possible if when we borrow the most entangled state of Schmidt rank equal to 2 we can perform supercatalysis returning a state of higher Schmidt rank. We first show that if this is the case, then without loss of generality such supercatalytic transformation must be possible returning a state with Schmidt rank equal to 3. We then show that this is nevertheless impossible.

\begin{lemma} \label{lemma1:nofull2}
Let $\ket{c},\ket{d}\in S(\mathcal{H}_2)$ such that $SR(\d)>3$, $SR(\c)=2$, and $\d \rightarrow \c$. Then, there exists $\ket{d'}\in S(\mathcal{H}_2)$ with $SR(\ket{d'})=3$ such that $\d \rightarrow \ket{d'}$ and $\ket{d'} \rightarrow \c$.
\end{lemma}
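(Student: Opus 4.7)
The plan is to reduce the problem to finding an intermediate probability vector via Nielsen's theorem and then exhibit it explicitly. Writing the Schmidt vectors $c = (c_1, c_2)$ (with $c_1 + c_2 = 1$, $c_2 > 0$) and $d = (d_1, \ldots, d_k, 0, \ldots)$ with $k = SR(\d) \geq 4$, the hypothesis $\d \to \c$ translates into $c \succ d$, which, since $c$ has only two nonzero entries, collapses to the single nontrivial inequality $c_1 \geq d_1$. What we need to produce is a probability vector $d'$ with exactly three nonzero Schmidt coefficients satisfying $d \prec d' \prec c$.

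The key observation is that $SR(\d) > 3$ forces $1 - d_1 - d_2 \geq d_3 + d_4 > 0$, i.e.\ there is strictly positive ``slack'' in the partial sum $f_2$ coming from $d$. Exploiting this together with $c_2 > 0$, I would take the explicit candidate
\begin{equation}
d' = (c_1,\, c_2 - \epsilon,\, \epsilon,\, 0, \ldots, 0),
\end{equation}
with $\epsilon \in \bigl(0,\, \min\{c_2/2,\, 1 - d_1 - d_2\}\bigr]$, an interval that is nonempty by the two positivity observations above. The bound $\epsilon \leq c_2/2$ keeps the entries in non-increasing order, and $\epsilon > 0$ together with $c_2 - \epsilon > 0$ guarantees that $SR(\ket{d'})$ is exactly $3$.

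It then remains to verify the two majorization conditions $d \prec d' \prec c$. The relation $c \succ d'$ is immediate because $f_1(c) = c_1 = d'_1$ saturates the first constraint and $f_k(c) = 1$ for all $k \geq 2$. For $d' \succ d$: $f_1(d') = c_1 \geq d_1$ follows from the hypothesis $c \succ d$; $f_2(d') = 1 - \epsilon \geq d_1 + d_2$ follows by the upper bound on $\epsilon$; and $f_k(d') = 1$ dominates $f_k(d)$ for $k \geq 3$. I do not expect any serious obstacle in this lemma. The only step that genuinely uses $SR(\d) > 3$ (rather than just $> 2$) is the positivity $1 - d_1 - d_2 > 0$, which is exactly the slack needed to insert a nonzero third Schmidt entry while still dominating $d$ in the second partial sum.
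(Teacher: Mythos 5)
Your proof is correct and takes essentially the same route as the paper's: both fix $d'_1=c_1$ (saturating the constraint coming from $\ket{d'}\to\ket{c}$ and automatically handling $f_1$) and then split the remaining mass $1-c_1$ between the second and third entries with a free parameter chosen so that $d'_1+d'_2\geq d_1+d_2$, your $\epsilon$ being the paper's $\tfrac{1-c_1}{2}-\alpha$ in disguise. The only difference is cosmetic: the paper commits to one specific value of the parameter, while you exhibit the whole admissible interval and note its nonemptiness.
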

\begin{proof}
Let $d=(d_1,d_2,d_3,d_4,...)$, $c=(c_1,c_2)$ and $d'=(d'_1,d'_2,d'_3)$. By Nielsen's theorem, there are three necessary and sufficient conditions for such $\ket{d'}$ to exist. On the one hand, $\ket{d'} \rightarrow \c$ iff $d'_1 \leq c_1$ (i). On the other hand,  $\d \rightarrow \ket{d'}$ iff $d_1 \leq d'_1$ (ii) and $d_1 + d_2 \leq d'_1 + d'_2$ (iii). To prove the claim, we make in the following a choice of $d'$ for any given $d$ and $c$ that is compatible with these three conditions. First, we fix $d'_1=c_1$, which obviously satisfies condition (i) and also condition (ii) (since the fact that $\d \rightarrow \c$ translates to $d_1 \leq c_1$). Note additionally that, since $c_1 \geq 1/2$, this choice is consistent with the fact that $d'_1 \geq d'_2$ and $d'_1 \geq d'_3$ should hold. Thus, it only remains to make a good choice of $d'_2$ so that condition (iii) holds as well. For this, we write $d' = (c_1, \frac{1-c_1}{2}+\alpha, \frac{1-c_1}{2}-\alpha)$, provided that the following requirements are met:
\begin{equation} 
\alpha \leq \frac{3}{2} c_1 -\frac{1}{2}
\end{equation}
so that $d'_1 \geq d'_2$, $\alpha \geq 0$ so that $d'_2 \geq d'_3$, and
\begin{equation}\label{eq1:lemma1:nofull2}
\alpha < \frac{1-c_1}{2}
\end{equation}
so that $d'_3 > 0$. It turns out that the most restrictive upper bound on $\alpha$ is the latter, because $\frac{1-c_1}{2} \leq \frac{3}{2} c_1 -\frac{1}{2}$ is equivalent to $c_1\geq1/2$, which is always true. On the other hand, with this parametrization condition (iii) boils down to
\begin{equation}\label{eq2:lemma1:nofull2}
\alpha \geq d_1+d_2 - \frac{c_1}{2}-\frac{1}{2}.
\end{equation}
Thus, to finish the proof we only need to provide a choice of $\alpha \geq 0$ such that both Eqs.\ (\ref{eq1:lemma1:nofull2}) and (\ref{eq2:lemma1:nofull2}) hold. Such a choice is given by $\alpha = \max \{0, d_1+d_2 - c_1/2-1/2\}$ (note that the fact that $d_1+d_2<1$ guarantees that the condition of Eq.\ (\ref{eq1:lemma1:nofull2}) is met).
\end{proof}

Note that if a given catalytic transformation $\ket{a} \rightarrow_c \ket{b}$ such that $C_2(\a,\b)\neq\emptyset$ is supercatalyzable with a borrowed state $\c\in C_2(\a,\b)$ and a returned state $\d$ such that $SR(\d)>3$, i.e.\ $\a\c\to\b\d$ and $\d\to\c$, then Lemma \ref{lemma1:nofull2} entails that $\a\c\to\b\d\to\b\ket{d'}$ and $\ket{d'} \rightarrow \c$ with $SR(\ket{d'})=3$. That is, the transformation has to be supercatalyzable as well with the same borrowed state and a Schmidt-rank-3 returned state. The following lemmas show that this is impossible when the borrowed state is the most entangled state of Schmidt rank equal to 2. Before addressing this, let us fix first some further notation. Note that given $\ket{a}\in S(\mathcal{H}_1)$ and $\ket{c}\in S(\mathcal{H}_2)$ such that $SR(\c)=2$, then for every given $k\in\mathbb{N}$ there exist $k_1\in\mathbb{N}$ and $k_2\in\mathbb{N}_0$ fulfilling $k_1\geq k_2$ and $k_1+k_2=k$ such that
\begin{equation}
f_k(a\otimes c)=f_{k_1,k_2}(a\otimes c):=c_1\sum_{i=1}^{k_1}a_i+c_2\sum_{i=1}^{k_2}a_i,
\end{equation}
where it should be understood that $\sum_{i=1}^0a_i=0$ when $k_2=0$. It is important to note that for a given $k$ the choice of the pair $\{k_1,k_2\}$ needs not be unique (this is the case when certain entries in $a\otimes c$ happen to be equal). Note as well that it must hold that 
\begin{equation}\label{eq:4}
f_k(a\otimes c)\geq f_{m,n}(a\otimes c)
\end{equation} 
for all $m\in\mathbb{N}$ and $n\in\mathbb{N}_0$ such that $m+n=k$.

\begin{lemma}\label{lemma2:nofull2}
Given $\ket{a},\ket{b}\in S(\mathcal{H}_1)$ such that $\ket{a}\nrightarrow\ket{b}$ and $C_2(\a,\b)\neq\emptyset$, let $\c$ be its most entangled catalyst of Schmidt rank equal to 2. Then, if $\a\c\to \b\d$, $\d\to \c$ and $SR(\d)=3$, it must hold that $d_1=c_1$.
\end{lemma}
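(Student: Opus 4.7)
My plan is to proceed by contradiction: assuming $d_1<c_1$, I will construct a rank-$2$ Schmidt vector $\ket{c'}$ that lies in $C_2(\a,\b)$ and whose largest Schmidt coefficient is strictly smaller than $c_1$, contradicting the maximality of $\c$ among rank-$2$ catalysts. The assumption $\d\to\c$ immediately gives $c\succ d$ and hence $d_1\le c_1$, so only the strict case needs to be ruled out.

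The principal construction is the rank-$2$ distribution $c'$ obtained by sorting the entries of $(d_1,1-d_1)=(d_1,d_2+d_3)$. To show $\ket{c'}\in C_2(\a,\b)$ I would use the chain
\begin{equation}
b\otimes c' \;\succ\; b\otimes d \;\succ\; a\otimes c \;\succ\; a\otimes c',
\end{equation}
where the first step follows from $c'\succ d$ (which always holds since $f_1(c')\ge d_1$ and $f_2(c')=1\ge d_1+d_2$), the middle step is the supercatalytic hypothesis $\a\c\to\b\d$, and the last step rests on $c\succ c'$, i.e.\ on $c_1\ge c'_1=\max(d_1,1-d_1)$. This latter inequality is automatic when $d_1\ge 1/2$ (yielding $c'_1=d_1\le c_1$) and also when $d_1<1/2$ provided $d_1>c_2=1-c_1$ (yielding $c'_1=1-d_1<c_1$). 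In either subcase the resulting strict inequality $c'_1<c_1$ produces a rank-$2$ catalyst more entangled than $\c$, contradicting the maximality of $\c$.

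The main obstacle is the remaining regime $d_1\le c_2$, in which the construction above either reproduces $\c$ itself (when $d_1=c_2$) or fails because $c\not\succ c'$ (when $d_1<c_2$), so it does not deliver a strictly more entangled rank-$2$ catalyst. To close this regime I would exploit the saturation imposed by the maximality of $\c$: there must exist $k^*$ such that $f_{k^*}(b\otimes c)=f_{k^*}(a\otimes c)$, since otherwise a small perturbation of $c$ towards the rank-$2$ uniform distribution would, by continuity, still satisfy every catalyst inequality and yield a rank-$2$ catalyst more entangled than $\c$. Combining this saturation with the supercatalytic inequality $f_{k^*}(b\otimes d)\ge f_{k^*}(a\otimes c)$ and with $f_{k^*}(b\otimes c)\ge f_{k^*}(b\otimes d)$ (which follows from $c\succ d$), one obtains the equality $f_{k^*}(b\otimes c)=f_{k^*}(b\otimes d)$. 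I expect the hardest step to be extracting the required contradiction from this identity by unpacking it through the partition formula $f_k(b\otimes x)=\max_{\sum_i k_i=k}\sum_i x_i f_{k_i}(b)$, comparing the optimal two-block partition for $c$ with the optimal three-block partition for $d$, and using the strict separations $c_1>c_2\ge d_1$ together with $d_3>0$ to rule out the assumption $d_1<c_1$.
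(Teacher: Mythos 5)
Your main construction is exactly the one the paper uses: truncate $d$ to the rank-$2$ vector $(d_1,1-d_1)$, observe $\d\to\ket{c'}$, and argue that if $d_1<c_1$ then $\ket{c'}$ is a rank-$2$ catalyst (your majorization chain is Proposition \ref{mainobs} unrolled) strictly more entangled than $\c$, contradicting maximality. On the case it covers, your argument is therefore correct and identical in approach to the paper's.

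Where you diverge is the regime $d_1\le c_2$. You are right that the truncation construction alone does not deliver the contradiction there: one needs both $c\succ c'$ and $c'_1<c_1$, and both require $d_1>c_2$. (The paper's own proof does not treat this regime separately either; it writes $(d_1,1-d_1)$ as if it were an ordered Schmidt vector and asserts $\ket{c'}\to\c$ and $E(\ket{c'})>E(\c)$, assertions that silently presuppose $c_2<d_1$.) However, your proposed closure of this regime is only a sketch, not a proof. The existence of a saturated index $k^*$ with $f_{k^*}(b\otimes c)=f_{k^*}(a\otimes c)$ does follow from the maximality of $\c$ by your perturbation argument (the same device the paper deploys in Lemma \ref{lemma4:nofull2}), and the chain giving $f_{k^*}(b\otimes c)=f_{k^*}(b\otimes d)=f_{k^*}(a\otimes c)$ is correctly derived. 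But you do not show how a contradiction follows: equalities $f_k(b\otimes c)=f_k(b\otimes d)$ with $c\succ d$, $c\neq d$ and different ranks are not by themselves impossible, so the ``strict separations'' you invoke would have to be combined with the remaining majorization constraints in a way you have not specified, and you explicitly defer this as the hardest step. As submitted, the case $d_1\le c_2$ is a genuine gap in your proposal: either show that this regime is vacuous under the hypotheses or carry out the $k^*$ analysis in full.
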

\begin{proof}
On the one hand, $\d\to\c$ is equivalent to $d_1\leq c_1$. On the other hand, by Nielsen's theorem, it always holds that $\d\to\ket{c'}$ where $\ket{c'}$ is the Schmidt-rank-2 state with Schmidt vector $(d_1,1-d_1)$. This implies that $\a\c\to \b\d\to\b\ket{c'}$ and $\ket{c'}\to \c$. Thus, if $d_1<c_1$, supercatalysis from $\a$ to $\b$ would be possible with $\c$ as a borrowed state and $\ket{c'}$ as a returned state with $\ket{c'}\in C_2(\a,\b)$ and $E(\ket{c'})>E(\c)$. Yet, this is a contradiction with the assumption that $\c$ is the most entangled catalyst of Schmidt rank equal to 2 for the catalytic transformation $\ket{a} \rightarrow_c \ket{b}$.
\end{proof}

\begin{lemma}\label{lemma3:nofull2}
Given $\ket{a},\ket{b}\in S(\mathcal{H}_1)$ such that $\ket{a}\nrightarrow\ket{b}$ and $C_2(\a,\b)\neq\emptyset$, let $\c$ be its most entangled catalyst of Schmidt rank equal to 2. Then, if $\a\c\to \b\d$, $\d\to \c$ and $SR(\d)=3$, it must hold that $\a\c\to \b\ket{c_\epsilon^{(3)}}$ and $\ket{c_\epsilon^{(3)}}\to \c$ for all $\epsilon>0$ sufficiently small, where $c_\epsilon^{(3)}:=(c_1,c_2-\epsilon,\epsilon)$.
\end{lemma}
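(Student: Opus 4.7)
The plan is to leverage Lemma \ref{lemma2:nofull2}, which already pins down the first Schmidt coefficient of $\d$ as $d_1=c_1$, and then show that the target state $\ket{c_\epsilon^{(3)}}$ is dominated by $\d$ in the Nielsen sense for all sufficiently small $\epsilon$. Once that is in place, the conclusion follows from tensoring with $\b$ and chaining LOCC transformations.

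First I would record the structural information on $\d$. Since $d_1=c_1$ (Lemma \ref{lemma2:nofull2}) and $SR(\d)=3$, we can write $d=(c_1,d_2,d_3)$ with $d_2\geq d_3>0$ and $d_2+d_3=c_2$. Next, I would check that $\ket{c_\epsilon^{(3)}}$ is well-defined: for $\epsilon\in(0,c_2/2]$ we have $c_1\geq c_2\geq c_2-\epsilon\geq \epsilon>0$, so $c_\epsilon^{(3)}\in\Delta'_3$.

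The main step is to verify $\d\to\ket{c_\epsilon^{(3)}}$ by Nielsen's theorem, i.e.\ $d\succ c_\epsilon^{(3)}$. The first partial sum reads $d_1=c_1=(c_\epsilon^{(3)})_1$, trivially. The second partial sum requires $d_1+d_2\leq (c_\epsilon^{(3)})_1+(c_\epsilon^{(3)})_2$, which simplifies to $\epsilon\leq c_2-d_2=d_3$; the third partial sum is $1\leq 1$. Hence, as long as $\epsilon\leq\min\{d_3,c_2/2\}$, the majorization $d\succ c_\epsilon^{(3)}$ holds, so $\d\to\ket{c_\epsilon^{(3)}}$.

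To finish, I would observe that Nielsen-majorization is preserved under tensoring by a common probability vector, so $d\succ c_\epsilon^{(3)}$ yields $b\otimes d\succ b\otimes c_\epsilon^{(3)}$, i.e.\ $\b\d\to\b\ket{c_\epsilon^{(3)}}$; chaining with the hypothesis $\a\c\to\b\d$ gives $\a\c\to\b\ket{c_\epsilon^{(3)}}$. The remaining claim $\ket{c_\epsilon^{(3)}}\to\c$ is immediate from Nielsen, since the partial sums of $c_\epsilon^{(3)}$ are $c_1$ and $c_1+c_2-\epsilon$, both bounded by the partial sums $c_1$ and $1$ of $c$. I do not anticipate any serious obstacle: the lemma is essentially an algebraic rearrangement of $d$ into the specific form $c_\epsilon^{(3)}$, and the only thing one must be careful about is choosing $\epsilon$ small enough (namely $\epsilon\leq\min\{d_3,c_2/2\}$) so that both the ordering of $c_\epsilon^{(3)}$ and the second majorization inequality are preserved.
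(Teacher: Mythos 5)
Your proof is correct and follows essentially the same route as the paper's (which simply invokes Lemma \ref{lemma2:nofull2} to get $d_1=c_1$ and then notes that $\d\to\ket{c_\epsilon^{(3)}}$ and $\ket{c_\epsilon^{(3)}}\to\c$ follow from Nielsen's theorem for small $\epsilon$); you merely make the quantitative threshold $\epsilon\leq\min\{d_3,c_2/2\}$ explicit. One cosmetic remark: you write the majorization as $d\succ c_\epsilon^{(3)}$, which is backwards relative to the paper's convention ($\ket{d}\to\ket{c_\epsilon^{(3)}}$ iff $c_\epsilon^{(3)}\succ d$), but the partial-sum inequalities you actually verify are the correct ones, so the argument stands.
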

\begin{proof}
Note that $c_2>0$ since a separable state cannot be catalyst and, therefore, $c_\epsilon^{(3)}$ is well defined and its entries are ordered for $\epsilon$ small enough. The claim follows immediately from Lemma \ref{lemma2:nofull2} as, by Nielsen's theorem, it implies that $\d\to \ket{c_\epsilon^{(3)}}$ for all $\epsilon>0$ sufficiently small; while $\ket{c_\epsilon^{(3)}}\to \c$ holds trivially.
\end{proof}

\begin{lemma}\label{lemma4:nofull2}
Given $\ket{a},\ket{b}\in S(\mathcal{H}_1)$ such that $\ket{a}\nrightarrow\ket{b}$ and $C_2(\a,\b)\neq\emptyset$, let $\c$ be its most entangled catalyst of Schmidt rank equal to 2. Then, it cannot hold that $\a\c\to \b\d$, $\d\to \c$ and $SR(\d)=3$.
\end{lemma}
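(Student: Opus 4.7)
The plan is to argue by contradiction: assume such a $\d$ exists with $SR(\d)=3$, $\a\c\to\b\d$, and $\d\to\c$. Lemma~\ref{lemma2:nofull2} then gives $d_1=c_1$, and Lemma~\ref{lemma3:nofull2} upgrades this to $\a\c\to\b\ket{c_\epsilon^{(3)}}$ for all sufficiently small $\epsilon>0$, with $c_\epsilon^{(3)}=(c_1,c_2-\epsilon,\epsilon)$. My strategy is to extract from this family of supercatalytic majorizations a rigid structural constraint on the optimal partitions realizing $f_{k_*}(b\otimes c)$ at every index $k_*$ at which the catalytic majorization $a\otimes c\prec b\otimes c$ is tight, and then to use this constraint to exhibit a Schmidt-rank-$2$ catalyst strictly more entangled than $\c$, contradicting the maximality of $\c$ in $C_2(\a,\b)$.

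First I would observe that $(c_1,c_2,0)$ strictly majorizes $c_\epsilon^{(3)}$ (only the second partial sum differs and it is larger by $\epsilon$), so $b\otimes(c_1,c_2,0)\succ b\otimes c_\epsilon^{(3)}$. Combined with $a\otimes c\prec b\otimes c_\epsilon^{(3)}$, this gives the chain
\begin{equation*}
f_k(a\otimes c)\le f_k(b\otimes c_\epsilon^{(3)})\le f_k(b\otimes c)
\end{equation*}
for every $k$ and all sufficiently small $\epsilon>0$. Because $\c$ maximizes the entanglement entropy on the compact set $C_2(\a,\b)$ (Lemma~\ref{lemma:compact}) and must therefore lie on its boundary, there has to exist at least one tight index $k_*$ with $f_{k_*}(a\otimes c)=f_{k_*}(b\otimes c)$; otherwise the strict catalytic inequalities at $c$ would extend by continuity to the more entangled Schmidt vector $(c_1-\delta,c_2+\delta)$ for small $\delta>0$, contradicting maximality.

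At any such tight $k_*$ the chain collapses to equalities and in particular $f_{k_*}(b\otimes c_\epsilon^{(3)})=f_{k_*}(b\otimes c)$ for all small $\epsilon>0$. Writing $f_{k_*}(b\otimes c)=\max_{k_1+k_2=k_*,\,k_1\ge k_2}f_{k_1,k_2}(b\otimes c)$ and the analogous maximum for $b\otimes c_\epsilon^{(3)}$ over three-term partitions, one computes that this equality for arbitrarily small $\epsilon>0$ forces the partition $(k_*,0)$ (the one whose $c_2$-slot is empty) to attain the maximum. Hence $f_{k_*}(b\otimes c)=c_1 f_{k_*}(b)$ with $k_*\le SR(\b)$, and since $(k_*,0)$ is also an admissible partition for $a\otimes c$, the tightness $f_{k_*}(a\otimes c)=c_1 f_{k_*}(b)$ forces $f_{k_*}(a)\le f_{k_*}(b)$.

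To close the argument I would verify that the Schmidt-rank-$2$ state with Schmidt vector $c_\delta:=(c_1-\delta,c_2+\delta)$ is itself a catalyst for some $\delta>0$, delivering the sought contradiction. Non-tight indices are preserved by continuity. At a tight $k_*$, the optimality of $(k_*,0)$ for $b\otimes c$ carries over to $f_{k_*}(b\otimes c_\delta)=(c_1-\delta)f_{k_*}(b)$ for small $\delta>0$, so the task reduces to establishing $f_{k_*}(a\otimes c_\delta)\le(c_1-\delta)f_{k_*}(b)$. I expect the main obstacle to be precisely this last inequality: if the maximum of $f_{k_*}(a\otimes c)$ is attained by a partition $(l_1,l_2)$ with $l_2\ge 1$, the associated linear piece in $\delta$ has slope strictly greater than $-f_{k_*}(b)$, which could a priori destroy the tight inequality as $\delta$ grows. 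Overcoming this will require combining the supercatalytic condition at $k_*$ with the analogous conditions at the neighbouring indices $k_*\pm 1$ to further restrict the optimizing partitions of $a\otimes c$ and rule out the dangerous case, after which $c_\delta$ provides a strictly more entangled Schmidt-rank-$2$ catalyst and the contradiction is complete.
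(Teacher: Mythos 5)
Your setup reproduces the first half of the paper's argument: you invoke Lemmas \ref{lemma2:nofull2} and \ref{lemma3:nofull2}, use the maximality of $\c$ in $C_2(\a,\b)$ to produce a tight index $k_*$ with $f_{k_*}(a\otimes c)=f_{k_*}(b\otimes c)$, and correctly deduce from $f_{k_*}(b\otimes c_\epsilon^{(3)})=f_{k_*}(b\otimes c)$ for arbitrarily small $\epsilon$ that the optimal partition for $b\otimes c$ at $k_*$ must be $(k_*,0)$, so that $f_{k_*}(b\otimes c)=c_1f_{k_*}(b)$. All of this is sound and matches the paper. However, the proof is not complete: the step you yourself flag as ``the main obstacle'' --- controlling $f_{k_*}(a\otimes c_\delta)$ when the optimal partition $(k_1,k_2)$ for $a\otimes c$ at $k_*$ has $k_2\geq 1$ --- is precisely the technical core of the lemma, and you leave it as a declared intention. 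Moreover, your proposed remedy (exploiting the conditions at the neighbouring indices $k_*\pm 1$) is not the mechanism that works, and the dangerous case is not to be ``ruled out'': it can genuinely occur.

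What closes the argument is showing that when $k_2\geq1$, strong equalities between $a$ and $b$ are forced. From the optimality of $(k',0)$ for $b\otimes c$ at every $k'\leq k_*$ one gets $c_1b_{k_*}\geq c_2b_1$, and the chain
\begin{equation*}
c_2\sum_{i=1}^{k_2}a_i\leq c_2k_2a_1\leq c_2k_2b_1\leq c_1k_2b_{k_*}\leq c_1\sum_{i=1}^{k_2}b_{i+k_1},
\end{equation*}
combined with $f_{k_1,k_2}(a\otimes c)=f_{k_*,0}(b\otimes c)$ and the catalytic inequality at index $k_1$ (not $k_*\pm1$), forces all of these inequalities to be equalities; the catalytic inequalities at the indices $j\leq k_2$ then yield $a_i=b_i=a_1=b_1$ for $i\leq k_2$, together with $f_{k_1}(a)=f_{k_1}(b)$ and $f_{k_2}(a)=f_{k_2}(b)$. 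These identities make the partition $(k_1,k_2)$ give identical values on $a\otimes c_\delta$ and $b\otimes c_\delta$, so the majorization inequality at $k_*$ survives the perturbation after all, producing the contradiction with the maximality of $\c$. Without this (or an equivalent) derivation your proof stops exactly where the difficulty begins. A minor further point: you should choose $k_*$ to be an index at which the perturbed vector $(c_1-\delta,c_2+\delta)$ actually violates majorization (such an index exists by maximality of $\c$), not merely any tight index, since trivially tight indices with $f_{k_*}(b\otimes c)=1$ do not support your conclusion that $(k_*,0)$ is optimal.
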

\begin{proof}
Given $\c\in S(\mathcal{H}_2)$ to be the most entangled catalyst of Schmidt rank equal to 2 for the catalytic transformation $\ket{a} \rightarrow_c \ket{b}$, for $\epsilon>0$ sufficiently small let $\ket{c_\epsilon^{(2)}}$ be a state in $S(\mathcal{H}_2)$ with Schmidt vector given by $c_\epsilon^{(2)}:=(c_1-\epsilon,c_2+\epsilon)$ (note that, since the maximally entangled state cannot be a catalyst, it holds that $c_1>1/2$ and, hence, the entries of $c_\epsilon^{(2)}$ remain ordered for small enough $\epsilon$). Since for all $\epsilon>0$ sufficiently small it holds that $E\left(\ket{c_\epsilon^{(2)}}\right)>E(\c)$, we have that this state is not a catalyst for $a\to_c b$ under the given assumption on $\epsilon$. Thus, since $\a\ket{c_\epsilon^{(2)}}\nrightarrow \b\ket{c_\epsilon^{(2)}}$, there must exist a fixed $k\in\mathbb{N}$ such that $f_k(a\otimes c_\epsilon^{(2)})>f_k(b\otimes c_\epsilon^{(2)})$ for all $\epsilon>0$ sufficiently small. That is, there exist $k_1,k_2,k'_1,k'_2$ as explained above such that
\begin{equation}\label{assumption}
f_k(a\otimes c_\epsilon^{(2)})=f_{k_1,k_2}(a\otimes c_\epsilon^{(2)})=(c_1-\epsilon)\sum_{i=1}^{k_1}a_i+(c_2+\epsilon)\sum_{i=1}^{k_2}a_i>f_k(b\otimes c_\epsilon^{(2)})=f_{k'_1,k'_2}(b\otimes c_\epsilon^{(2)})=(c_1-\epsilon)\sum_{i=1}^{k'_1}b_i+(c_2+\epsilon)\sum_{i=1}^{k'_2}b_i.
\end{equation}
By continuity and the fact that it must hold that $f_k(a\otimes c)\leq f_k(b\otimes c)$ given that $\c\in C(\a,\b)$, it follows that 
\begin{equation}\label{eq:equality}
f_k(a\otimes c)=f_{k_1,k_2}(a\otimes c)=f_k(b\otimes c)= f_{k'_1,k'_2}(b\otimes c).
\end{equation}

Assume now for a contradiction that there exists $\d$ such that $\a\c\to \b\d$, $\d\to \c$ and $SR(\d)=3$. Then, by Lemma \ref{lemma3:nofull2}, $f_{k}(a\otimes c)\leq f_{k}(b\otimes c_\epsilon^{(3)})\leq f_{k}(b\otimes c)$ holds for $\epsilon$ small enough, where the latter inequality is due to $\ket{c_\epsilon^{(3)}}\to \c$. Hence, by Eq.\ (\ref{eq:equality}), we have that 
\begin{equation}\label{eq:nueva}
f_k(a\otimes c)= f_{k}(b\otimes c_\epsilon^{(3)})=f_k(b\otimes c).
\end{equation}
Now, by taking $\epsilon$ sufficiently small we can guarantee that $(c_2-\epsilon)b_{SR(|b\rangle)} > \epsilon b_1$. Therefore, there exist $k''_1\in\mathbb{N}$ and $k''_2\in\mathbb{N}_0$ with $k''_1+k''_2=k$ such that for all $\epsilon>0$ small enough it holds that
\begin{equation}\label{eq:sandwhich}
f_{k}(b\otimes c_\epsilon^{(3)})=c_1\sum_{i=1}^{k''_1}b_i+(c_2-\epsilon)\sum_{i=1}^{k''_2}b_i=f_{k_1'',k_2''}(b\otimes c)-\epsilon\sum_{i=1}^{k''_2}b_i.
\end{equation}
However, $f_{k_1'', k_2''}(b\otimes c)\leq f_k(b\otimes c)$ by Eq.\ (\ref{eq:4}), and Eq.\ (\ref{eq:nueva}) is then impossible unless $k_2''=0$. Therefore, 
\begin{equation}
f_k(b\otimes c)= f_{k,0}(b\otimes c)=c_1\sum_{i=1}^{k}b_i,
\end{equation}
which in particular entails that $c_1b_k\geq c_2b_1$ and that 
\begin{equation}\label{eq:1}
f_{k'}(b\otimes c)= f_{k',0}(b\otimes c)\quad\forall k'\leq k.
\end{equation}
We now establish the following chain of inequalities:
\begin{equation}\label{ineqs}
c_2\sum_{i=1}^{k_2}a_i\leq c_2\sum_{i=1}^{k_2}a_1\leq c_2\sum_{i=1}^{k_2}b_1\leq c_1\sum_{i=1}^{k_2}b_k\leq c_1\sum_{i=1}^{k_2}b_{i+k_1},
\end{equation}
where in the first inequality we have used that $a_1\geq a_i$ $\forall i$, in the second that $a_1\leq b_1$ is a necessary condition for $\a\to_c \b$, in the third the aforementioned observation that $c_1b_k\geq c_2b_1$ and in the fourth that $k_1+k_2=k$ and $b_k\leq b_i$ for $k_1+1\leq i\leq k$. But
\begin{equation}\label{eq:sums}
f_{k_1,k_2}(a\otimes c)=f_{k,0}(b\otimes c)\Leftrightarrow c_1\sum_{i=1}^{k_1}a_i+c_2\sum_{i=1}^{k_2}a_i=c_1\sum_{i=1}^{k_1}b_i+c_1\sum_{i=1}^{k_2}b_{i+k_1},
\end{equation}
which, together with Eq.\ (\ref{ineqs}), requires that
\begin{equation}\label{eq:above2}
\sum_{i=1}^{k_1}a_i\geq\sum_{i=1}^{k_1}b_i.
\end{equation}
However, using Eqs.\ (\ref{eq:4}) and (\ref{eq:1}), this implies that
\begin{equation}\label{eq:above}
f_{k_1}(a\otimes c)\geq f_{k_1,0}(a\otimes c)=c_1\sum_{i=1}^{k_1}a_i\geq c_1\sum_{i=1}^{k_1}b_i=f_{k_1}(b\otimes c),
\end{equation}
while $f_{k_1}(a\otimes c)\leq f_{k_1}(b\otimes c)$ must hold since $\a\c\to \b\c$. Thus, we conclude that all the inequalities in Eqs.\ (\ref{ineqs}), (\ref{eq:above2}) and (\ref{eq:above}) must hold with equality. In particular, from Eq.\ (\ref{eq:above2}) we obtain that
\begin{equation}\label{eq:2}
\sum_{i=1}^{k_1}a_i=\sum_{i=1}^{k_1}b_i,
\end{equation}
and from Eq.\ (\ref{ineqs}) that $b_1=a_1=a_i$ for $1\leq i\leq k_2$. This last equation entails that for $1\leq j\leq k_2$ it holds that $f_j(a\otimes c)=ja_1c_1=jb_1c_1$ and, therefore, $f_j(a\otimes c)\leq f_j(b\otimes c)$ can only hold if $b_1=a_1=a_i=b_i$ for $1\leq i\leq k_2$ (recall that $c_1>c_2$). This in particular means that
\begin{equation}\label{eq:3}
\sum_{i=1}^{k_2}a_i=\sum_{i=1}^{k_2}b_i.
\end{equation}
Now, using Eqs.\ (\ref{eq:4}), (\ref{eq:2}) and (\ref{eq:3}) we arrive at the conclusion that
\begin{equation}
f_k(b\otimes c_\epsilon^{(2)})=f_{k'_1,k'_2}(b\otimes c_\epsilon^{(2)})\geq f_{k_1,k_2}(b\otimes c_\epsilon^{(2)})=f_{k_1,k_2}(a\otimes c_\epsilon^{(2)})=f_{k}(a\otimes c_\epsilon^{(2)})
\end{equation}
holds for all $\epsilon>0$ sufficiently small, which is a contradiction with Eq.\ (\ref{assumption}). This concludes the proof.
\end{proof}

\begin{proof}[Proof of Theorem~\ref{th:nofull2}]
Assume for a contradiction that there exist $\ket{a},\ket{b}\in S(\mathcal{H}_1)$ such that $\ket{a}\nrightarrow\ket{b}$ and $C_2(\a,\b)\neq\emptyset$, for which the transformation from $\a$ to $\b$ is fully supercatalyzable. Then, in particular, a supercatalytic transformation from $\a$ to $\b$ must be possible when the borrowed state $\c$ is the most entangled catalyst of Schmidt rank equal to $2$. However, by Lemmas \ref{lemma1:nofull2} and \ref{lemma4:nofull2} this imposes that the corresponding returned state $\d$ must have the property that $\d\in C_2(\a,\b)$ as well. Since it must hold that $\ket{d}\to\ket{c}$ with $c\neq d$, the strict Schur-concavity of the entropy enforces that $E(\d)>E(\c)=E_2(\a,\b)$, which is a contradiction.
\end{proof}

\section{Supercatalytic protocols with constrained auxiliary states: minimal scenario}\label{sec4}

As we have discussed in Sec.\ \ref{sec3a}, the problem of optimizing the gain trivializes when Scrooge has absolute freedom to choose the state he lends. We have also argued therein that it is much more natural to consider that this choice is constrained, such as in the minimal scenario. Thus, given a fixed non-trivial catalytic transformation $\ket{a} \rightarrow_c \ket{b}$, in this section we want to explore what $G_{\max}(\a,\b,\c)$ could be for specific choices of borrowed states $\c$. In particular, we would like to optimize this quantity under the condition that the borrowed states belong to a certain set; a natural choice being the optimal gain in the minimal scenario $\tilde{G}_{\max}(\a,\b)$, where this set is the set of all catalysts of the minimal possible Schmidt rank.

The analysis performed in Sec.\ \ref{sec3b} in the study of full supercatalysis shows that, in the scenarios where the Schmidt rank of the borrowed state is constrained, what could be called a generous strategy, in which Scrooge lends as much entanglement as possible, is in general a bad strategy. This is because we have seen that under quite general premises lending a most entangled catalyst of at most a given Schmidt rank leads to zero gain and, by continuity, states in its vicinity must have gain close to zero. We can put this statement in quantitative terms with the following relatively immediate upper bound on $G_{\max}(\a,\b,\c)$.

\begin{theorem}\label{th:bound}
If there exists a supercatalytic transformation from $\ket{a}\in S(\mathcal{H}_1)$ into $\ket{b}\in S(\mathcal{H}_1)$ with borrowed state $\ket{c}\in S(\mathcal{H}_2)$, it then holds that
\begin{equation}
G_{max}(\a,\b,\c) \leq \frac{E_{\left\lfloor SR(\a)SR(\c)/SR(\b)\right\rfloor}(\a,\b)-E(\c)}{E(\a)-E(\b)}.
\label{bound}
\end{equation}
\end{theorem}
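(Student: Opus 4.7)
The plan is to derive this bound as a direct consequence of two facts established earlier in the excerpt: Proposition~\ref{SR(d)bound}, which constrains the Schmidt rank of the returned state in any supercatalytic protocol, and Proposition~\ref{mainobs}, which forces the returned state itself to be a catalyst for the transformation $\a\to_c\b$. Combining these two observations pins down precisely the set over which the optimization in $G_{\max}(\a,\b,\c)$ takes place, and the definition of $E_r(\a,\b)$ then does the rest.

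More concretely, I would proceed as follows. Fix any $\d\in S(\mathcal{H}_2)$ that realizes a valid supercatalytic transformation with borrowed state $\c$, i.e.\ $\a\c\to\b\d$ and $\d\to\c$ with $c\neq d$. Set $r:=\lfloor SR(\a)SR(\c)/SR(\b)\rfloor$. By Proposition~\ref{SR(d)bound} we have $SR(\d)\leq r$, and by Proposition~\ref{mainobs} we also have $\d\in C(\a,\b)$. Together these imply $\d\in C_r(\a,\b)$, and in particular this set is nonempty. Hence by the very definition of $E_r(\a,\b)$ (which is well-defined since $C_r(\a,\b)$ is compact by Lemma~\ref{lemma:compact} and the entropy is continuous), we get the bound $E(\d)\leq E_r(\a,\b)$.

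Plugging this into the expression for the gain yields
\begin{equation}
G(\a,\b,\c,\d)=\frac{E(\d)-E(\c)}{E(\a)-E(\b)}\leq\frac{E_r(\a,\b)-E(\c)}{E(\a)-E(\b)},
\end{equation}
where the denominator is strictly positive because $\a\to_c\b$ with $\a\not\to\b$ combined with the strict Schur-concavity of the Shannon entropy forces $E(\a)>E(\b)$ (as already noted in the paragraph introducing the gain). Since the right-hand side is independent of $\d$, taking the supremum over all valid returned states gives the claimed inequality for $G_{\max}(\a,\b,\c)$.

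There is really no serious obstacle here: the content of the theorem is essentially the packaging of Propositions~\ref{mainobs} and~\ref{SR(d)bound} together with the definition of $E_r(\a,\b)$. The only minor subtlety worth flagging explicitly in the write-up is that the hypothesis of the theorem already guarantees that the supremum is taken over a nonempty set (so the bound is not vacuous), and that $E(\a)-E(\b)>0$ so that dividing by it is legitimate.
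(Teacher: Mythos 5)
Your argument is correct and is exactly the paper's proof: combine Propositions~\ref{mainobs} and~\ref{SR(d)bound} to place the returned state $\ket{d}$ in $C_{\lfloor SR(\a)SR(\c)/SR(\b)\rfloor}(\a,\b)$, bound $E(\d)$ by $E_{\lfloor SR(\a)SR(\c)/SR(\b)\rfloor}(\a,\b)$, and take the supremum. The extra remarks on well-definedness and the positive denominator are fine but not needed beyond what the paper already establishes.
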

\begin{proof}
From Propositions \ref{mainobs} and \ref{SR(d)bound}, it follows that the returned state $\d$ must satisfy 
$$\d\in C_{\left\lfloor SR(\a)SR(\c)/SR(\b)\right\rfloor}(\a,\b).$$ Thus, by definition, we have that $E(\d)\leq E_{\left\lfloor SR(\a)SR(\c)/SR(\b)\right\rfloor}(\a,\b)$.
\end{proof}

Roughly speaking, the aforementioned observation that generous strategies perform in general badly leaves Scrooge with two possibilities. Is it in general preferable for him to follow a miserly strategy in which he lends as little entanglement as possible? Or, on the contrary, is there a trade-off that favours an intermediate strategy in which he lends more entanglement -- but not as much as possible -- in order to increase his gain? Note that if the bound of Theorem \ref{th:bound} were tight for all borrowed states $\c$ of at most a given Schmidt rank, then this would mean that, under this constraint, Scrooge's optimal strategy is to lend a least entangled catalyst in this set.

\subsection{Exploring optimal protocols in the minimal scenario}

If we fix the Schmidt rank of the states Scrooge can lend, then by virtue of Propositions \ref{mainobs} and \ref{SR(d)bound} we know that all possible strategies rely on optimizations over borrowed states in $C_r(\a,\b)$ and returned states in $C_{r'}(\a,\b)$ for some $r,r'\in\mathbb{N}$. However, as mentioned in the introduction, the general characterization of such sets seems a daunting problem and the available results in this direction are limited. For this reason, in this section we consider the simplest scenario in which $SR(\a),SR(\b)\leq4$, where Theorem \ref{th:catalyst} characterizes $C_2(\a,\b)$. In fact, in this case, whenever $C_2(\a,\b)\neq\emptyset$, due to Proposition \ref{SR(d)bound}, in the minimal scenario it must also hold that $\d\in C_2(\a,\b)$. Thus, by the results of Sec.\ \ref{sec3b}, we know that in this case a generous strategy is bound to failure. Moreover, in the language of Theorem \ref{th:catalyst}, given $\ket{a},\ket{b}\in S(\mathcal{H}_1)$ such that $SR(|a\rangle),SR(|b\rangle)\leq4$ and $\ket{a}\nrightarrow\ket{b}$ fulfilling $x_{\min}(a,b)\leq x_{\max}(a,b)$, the bound of Theorem \ref{th:bound} reduces to
\begin{equation}\label{minbound}
G_{max}(\a,\b,\c) \leq \frac{h(x_{\min}(\a,\b))-E(\c)}{E(\a)-E(\b)},
\end{equation}
where $h(\cdot)$ stands for the binary entropy, and we have that
\begin{equation}\label{optminbound}
\tilde{G}_{max}(\a,\b) \leq \frac{h(x_{\min}(\a,\b))-h(x_{\max}(\a,\b))}{E(\a)-E(\b)}.
\end{equation}
Note that this latter bound is optimal iff the miserly strategy is optimal and a supercatalytic transformation from $\a$ to $\b$ is possible borrowing the least entangled catalyst of Schmidt rank equal to 2 and returning the most entangled catalyst of this Schmidt rank. Furthermore, if this is the case, then the bound of Eq.\ (\ref{minbound}) is optimal $\forall\c\in C_2(\a,\b)$. This is because the LOCC ordering in this case is total and, hence, given any such $\c$ we can always transform it into the least entangled catalyst. Thus, we can implement this transformation as a first step of the supercatalytic protocol and then, as a second step, run the supercatalytic protocol that returns the most entangled catalyst upon borrowing the least entangled catalyst. Actually, whenever the bound in Eq.\ (\ref{minbound}) is optimal for some choice of $\c\in C_2(\a,\b)$, then, by the same reason as above, it has to remain optimal for every $\ket{c'}\in C_2(\a,\b)$ such that $E(\ket{c'})>E(\c)$.

We have performed an extensive investigation assisted by numerical means of what $G_{\max}(\a,\b,\c)$ and $\tilde{G}_{\max}(\a,\b)$ can be in this simplest scenario and we have found that in most instances the miserly strategy performs well. However, it can sometimes be not only suboptimal but even extremely disadvantageous. This lack of structure, even in the simplest conceivable case, makes us think that determining in general the optimal supercatalytic strategies and gain in the minimal scenario for arbitrary input and output states in the main system appears to be an intractable problem. In the following we provide a few selected examples to illustrate these points. The figures plot $G_{\max}(\a,\b,\c)$ (together with the bound given by Eq.\ (\ref{minbound})) vs.\ the entanglement entropy of the borrowed state, $E(\c)$, in the whole range of possible catalysts, $[h(x_{\max}(a,b),h(x_{\min}(a,b))]$, for different choices of catalytic transformation $\ket{a} \rightarrow_c \ket{b}$.    

\begin{example}
Figure \ref{fig:Ej_JP} corresponds to Jonathan and Plenio's original example of a catalytic transformation \cite{JP}: $a = (0.4, 0.4, 0.1, 0.1)$ and $b=(0.5, 0.25, 0.25, 0)$. This example shows that the bound of Eq.\ (\ref{optminbound}) can be optimal, and, consequently, so is the miserly strategy. Note that in order to see this, it is enough to verify that 
\begin{equation}
b\otimes(x_{\min}(a,b),1-x_{\min}(a,b))\succ a\otimes(x_{\max}(a,b),1-x_{\max}(a,b)),
\end{equation}
which is straightforward to check.
\end{example}

\begin{figure}[h!]
    \centering
    \begin{minipage}{0.48\textwidth} 
        \centering
        \includegraphics[width=\textwidth]{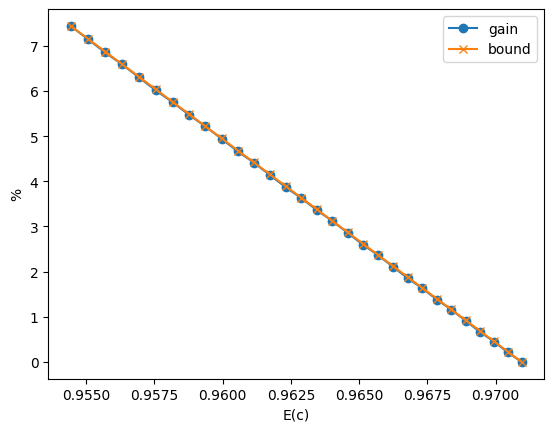}
        \caption{\label{fig:Ej_JP} $G_{\max}(\a,\b,\c)$ (together with its upper bound given in Eq.\ (\ref{minbound})) vs.\ the entanglement entropy of the borrowed state, $E(\c)$, for Example 1.}
    \end{minipage}
    \hfill 
    \begin{minipage}{0.48\textwidth}
        \centering
        \includegraphics[width=\textwidth]{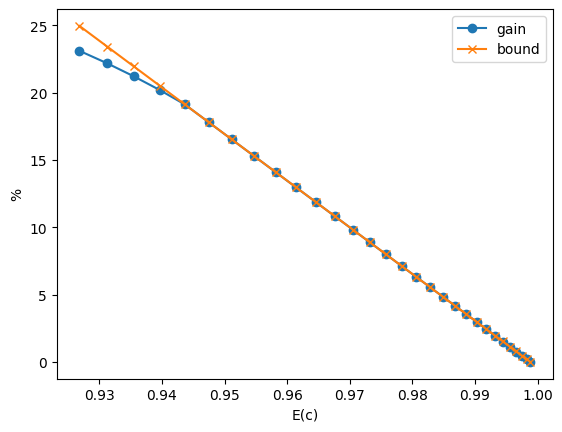}
        \caption{\label{fig:Ej_FWX} $G_{\max}(\a,\b,\c)$ (together with its upper bound given in Eq.\ (\ref{minbound})) vs.\ the entanglement entropy of the borrowed state, $E(\c)$, for Example 2.}
    \end{minipage}
\end{figure}

\begin{example} In this example (which was also considered in \cite{FWX}), we have perturbed the Schmidt vector of the input state of Example 1: $a = (0.4, 0.36, 0.14, 0.1)$ and $b=(0.5, 0.25, 0.25, 0)$. The result is depicted in Fig.\ \ref{fig:Ej_FWX} and it shows that, although the bound of Eq.\ (\ref{optminbound}) is not tight in this case (and, hence, nor is that of Eq.\ (\ref{minbound}) for the whole range of catalysts), the miserly strategy remains optimal.
\end{example}

\begin{example} In this example we further perturb the Schmidt vector of the input state of the previous examples: $a = (0.41, 0.38, 0.12, 0.09)$ and $b=(0.5, 0.25, 0.25, 0)$. This leads to a drop in the gain of the miserly strategy to the point that it is (slightly) suboptimal, as can be seen in Fig.\ \ref{fig:Ej_caida}.
\end{example}

\begin{example}
Here, by choosing $a = (0.88, 0.08, 0.02, 0.02)$ and $b = (0.9, 0.05, 0.05, 0)$, we construct an example where the miserly strategy is not only suboptimal but leads to a complete failure, as seen in Fig.\ \ref{fig:Ej_batacazo}. This shows that lending the least entangled catalyst can lead to zero gain, just as it always does in the cases we study here when lending the most entangled catalyst. On the other hand, an intermediate strategy makes it possible to obtain a non-negligible gain.
\end{example}

\begin{figure}[h!]
    \centering
    \begin{minipage}{0.48\textwidth} 
        \centering
        \includegraphics[width=\textwidth]{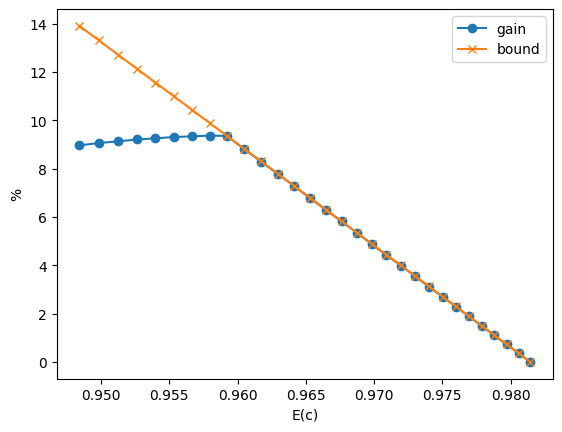}
        \caption{\label{fig:Ej_caida} $G_{\max}(\a,\b,\c)$ (together with its upper bound given in Eq.\ (\ref{minbound})) vs.\ the entanglement entropy of the borrowed state, $E(\c)$, for Example 3.}
    \end{minipage}
    \hfill 
    \begin{minipage}{0.48\textwidth}
        \centering
        \includegraphics[width=\textwidth]{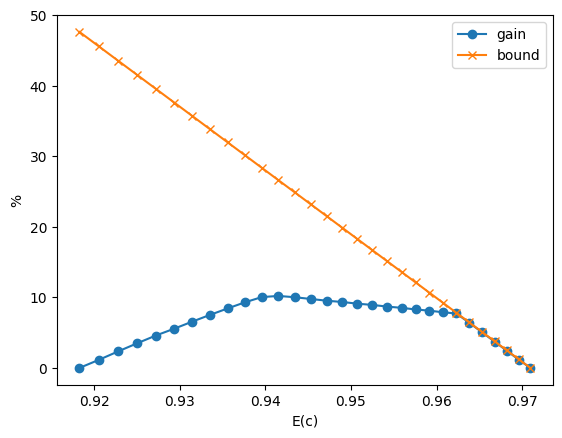}
        \caption{\label{fig:Ej_batacazo} $G_{\max}(\a,\b,\c)$ (together with its upper bound given in Eq.\ (\ref{minbound})) vs.\ the entanglement entropy of the borrowed state, $E(\c)$, for Example 4.}
    \end{minipage}
\end{figure}

As mentioned above, these examples show that there is not a universally optimal strategy for maximizing the gain in the minimal scenario. As dictated by our results in Sec.\ \ref{sec3b}, supercatalysis is not possible in the cases studied here when we borrow the most entangled catalyst. It is interesting to observe that otherwise the gain is always non-zero, except in the last example where it is also null for the least entangled catalyst. However, it is particularly noteworthy that $\tilde{G}_{max}(\a,\b)$ always attains modest values: below $1/4$ in Example 2 and around a meager $0.1$ in the other three. This feature is also observed in other examples and might lead to think that $\tilde{G}_{max}(\a,\b)$ is fundamentally limited. We address this question in the next subsection, which closes this article. 

\subsection{Minimal supercatalysis with entanglement gain arbitrarily close to 1}

The above-mentioned examples give rise to moderate values of the maximal gain in the minimal scenario and we have finished the previous subsection raising the question of whether there could be a more profound reason for this. Could it be that there is a universal constant $C < 1$ such that $\tilde{G}_{max}(\a,\b) \leq C$ holds for all non-trivial catalytic transformations $\ket{a} \rightarrow_c \ket{b}$, or at least for all those in the simplest scenario, where $SR(\a),SR(\b)\leq4$ and $C_2(\a,\b)\neq\emptyset$? In fact, as we prove in the following, a fundamental limitation does indeed hold in this latter case.

\begin{theorem}\label{th:nomaxgain}
Let $\ket{a},\ket{b}\in S(\mathcal{H}_1)$ be such that $SR(|a\rangle),SR(|b\rangle)\leq4$, $\ket{a}\nrightarrow\ket{b}$, and $C_2(\ket{a},\ket{b})\neq\emptyset$. Then, $\tilde{G}_{\max}(\a,\b)<1$.
\end{theorem}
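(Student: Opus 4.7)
The plan is a proof by contradiction: suppose $\tilde{G}_{\max}(\a,\b)=1$. By Remark \ref{remark} this supremum is attained, so there exist $\ket c\in C_2(\a,\b)$ and $\ket d$ realizing $G(\a,\b,\c,\d)=1$. Propositions \ref{mainobs} and \ref{SR(d)bound} force $SR(\d)=2$ (any catalyst must be entangled, while $SR(\d)\le\lfloor 2\,SR(\a)/SR(\b)\rfloor\le 2$ for both $SR(\b)=3$ and $SR(\b)=4$), so $c=(x,1-x)$ and $d=(y,1-y)$ with $1/2\le y<x\le 1$. The gain-$1$ condition, together with $\a\c\to\b\d$ and the strict Schur-concavity of the Shannon entropy, yields the multiset equality $(a\otimes c)^\downarrow=(b\otimes d)^\downarrow$.

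The case $SR(\b)=3$ is then impossible, since it would give $SR(a\otimes c)=8\ne 6=SR(b\otimes d)$. So $SR(\a)=SR(\b)=4$. Matching the largest entries gives $a_1 x=b_1 y$ and hence (since $x>y$) $a_1<b_1$ strictly. A short case analysis of the second-largest entry matching $\max(a_2 x,a_1(1-x))=\max(b_2 y,b_1(1-y))$ discards three of the four possible pairings: two force $a_1=b_1$, and the pairing $a_2 x=b_2 y$ gives $a_1 b_2=a_2 b_1$, whence $a_1+a_2<b_1+b_2$ in contradiction with $f_2(\a)>f_2(\b)$ from Eq.~(\ref{eq:thcatalyst0}). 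The only surviving case is $a_2 x=b_1(1-y)$, which pins down
\begin{equation*}
x=\frac{b_1}{a_1+a_2},\qquad y=\frac{a_1}{a_1+a_2}.
\end{equation*}

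With $x$ and $y$ determined, I would continue matching the remaining entries rank by rank. The possible interleavings of the products $a_i c_j$ and $b_i d_j$, governed by finitely many binary comparisons such as $a_i x \geq a_j(1-x)$ and their $b$-analogues, generate a bounded tree of sub-cases, each producing further algebraic relations (e.g.\ $a_3 b_1=a_1 b_2$, $a_4 b_1=a_1 b_3$, or a forced identity like $b_1 b_4=b_2 b_3$). Combining these with the normalizations $\sum_i a_i=\sum_i b_i=1$ and the strict inequality $a_1+a_2>b_1+b_2$, every branch eventually yields a contradiction of one of two kinds: either a violation of the middle inequality in Eq.~(\ref{eq:thcatalyst0}) (typically $a_1+a_2\le b_1+b_2$), or a forced geometric-progression structure on the Schmidt vectors that makes $\a$ and $\b$ LU-equivalent (and hence both maximally entangled), in contradiction with $\ket a\nrightarrow \ket b$.

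The main obstacle is this systematic closure of all branches of the case tree: although the branching is bounded, the contradictions in different branches come from slightly different algebraic manipulations, so the bookkeeping is delicate. Once every leaf is closed, the supposition $\tilde{G}_{\max}(\a,\b)=1$ is refuted and the theorem follows.
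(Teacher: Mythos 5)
Your setup matches the paper's proof exactly: you reduce gain $1$ to the multiset equality $(a\otimes c)^\downarrow=(b\otimes d)^\downarrow$ via strict Schur-concavity and Remark \ref{remark}, rule out $SR(\b)=3$ by counting nonzero entries, and correctly pin down $SR(\a)=SR(\b)=4$, $SR(\c)=SR(\d)=2$. The first two rank matchings you carry out ($a_1x=b_1y$ and the elimination of three of the four pairings for the second entry) are also sound. The problem is that the heart of the argument — showing that \emph{every} branch of the ensuing case tree closes — is asserted rather than proved, and you yourself flag it as the main obstacle. As written, this is a genuine gap: the theorem is not established until that combinatorial closure is actually performed, and the bookkeeping is nontrivial precisely because the multiset equality \emph{can} be satisfied by non-degenerate vectors. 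The paper avoids the case tree entirely by invoking a known structural result (from the multipartite-catalysis reference) stating that, for Schmidt ranks $4\times 2=4\times 2$, the equality $(a\otimes c)^\downarrow=(b\otimes d)^\downarrow$ forces $a=x\otimes y$ and $b=x\otimes z$ with $x,y,z\in\Delta_2$; since $\Delta_2$ is totally ordered by majorization, this makes $\a$ and $\b$ LOCC-comparable, contradicting $\ket{a}\nrightarrow\ket{b}$ together with $E(\a)>E(\b)$.

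Relatedly, your description of how the ``consistent'' branches terminate is not quite right: the obstruction there is not that $\a$ and $\b$ become LU-equivalent or maximally entangled, but that they acquire a common Schmidt-rank-2 tensor factor, whence one majorizes the other. If you want a self-contained proof that avoids citing the structural lemma, you must either complete the case tree explicitly or reprove that lemma; either way the current write-up does not yet constitute a proof.
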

\begin{proof}
The key observation here is that $\tilde{G}_{\max}(\a,\b)=1$ iff $\exists\c,\d\in S(\mathcal{H}_2)$ such that $G(\a,\b,\c,\d)=1$ (see\ Remark \ref{remark}), and, in turn, by the strict Schur-concavity of the entropy, this is equivalent to the transformation $\a\c\to\b\d$ being implementable by local unitary transformations, i.e.\ 
\begin{equation}\label{eq:nomaxgain}
(a\otimes c)^\downarrow=(b\otimes d)^\downarrow.
\end{equation}
Now, catalytic transformations are impossible when $SR(\a)<4$ and $SR(\b)<3$ \cite{JP}. Thus, it must hold that $SR(\a)=4$; but, if $SR(\c)=2$, Eq.\ (\ref{eq:nomaxgain}) cannot hold when $SR(\b)=3$. Therefore, we only need to consider the case $SR(\a)=SR(\b)=4$ and, therefore, $SR(\c)=SR(\d)=2$. Crucially, in this case it has been shown in \cite{multipartitecatalysis} that Eq.\ (\ref{eq:nomaxgain}) can only hold if $a=x\otimes y$ and $b=x\otimes z$ with $x,y,z\in\Delta_2$. However, this implies that either $y\succ z$ or $z\succ y$ and, consequently, that either $\a\to\b$ or $\b\to\a$, which is in contradiction with our assumption that $\ket{a}\nrightarrow\ket{b}$ and $\ket{a} \rightarrow_c \ket{b}$.
\end{proof}

The above theorem shows that, at least in the simplest setting, there is a fundamental difference between the optimal gain in the unrestricted scenario (see\ Proposition \ref{th:allsuper}) and the optimal gain in the minimal scenario. However, this fundamental restriction in the latter case only forbids maximal gain. It turns out that this is as much as one can restrict the gain in the minimal scenario from a universal perspective. Perhaps surprisingly given Theorem \ref{th:nomaxgain}, in the following we prove that, even within the case $SR(|a\rangle),SR(|b\rangle)\leq4$ and $C_2(\ket{a},\ket{b})\neq\emptyset$, one can always find input and output states $\a$ and $\b$ such that $\tilde{G}_{\max}(\a,\b)$ is as close to 1 as desired. Therefore, there exists no universal constant $C < 1$ such that $\tilde{G}_{max}(\a,\b) \leq C$ must hold $\forall\a,\b$.

\begin{theorem}\label{th:almostmaxgain}
For any $\delta>0$, $\exists\ket{a_\delta},\ket{b_\delta}\in S(\mathcal{H}_1)$ such that $SR(|a_\delta\rangle),SR(|b_\delta\rangle)\leq4$, $\ket{a_\delta}\nrightarrow\ket{b_\delta}$, and $C_2(\ket{a_\delta},\ket{b_\delta})\neq\emptyset$, for which it holds that $\tilde{G}_{\max}(\ket{a_\delta},\ket{b_\delta})>1-\delta$.
\end{theorem}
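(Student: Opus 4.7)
The plan is to construct, for each $\delta > 0$, an explicit pair of Schmidt vectors $a, b \in \Delta_4$ with $C_2(\ket{a}, \ket{b}) \neq \emptyset$ together with a pair of catalysts $c, d \in \Delta_2$ that implement a supercatalytic transformation at relative gain $> 1 - \delta$; by Definition~\ref{def:minimal} this will yield $\tilde{G}_{\max}(\ket{a}, \ket{b}) > 1 - \delta$. The construction is guided by the obstruction behind Theorem~\ref{th:nomaxgain}: the equality $(a \otimes c)^\downarrow = (b \otimes d)^\downarrow$ rigidly forces $a = x \otimes y$, $b = x \otimes z$ with $x, y, z \in \Delta_2$, and this rigid form necessarily implies direct LOCC convertibility between $\ket{a}$ and $\ket{b}$. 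The strategy is therefore to approach this tensor-product configuration in the limit, incurring only a controlled $O(\eta)$ defect that destroys direct LOCC while being compensated by an $O(\eta)$ adjustment of the ideal catalyst pair.

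Concretely, I would fix $x = (x_1, x_2)$ with $1 > x_1 > x_2 > 0$ and $y, z \in \Delta_2$ with $1/2 \leq y_1 < z_1$ (so $z \succ y$ strictly). At the unperturbed point $a_0 = x \otimes y$, $b_0 = x \otimes z$, $c_0 = z$, $d_0 = y$, the $8$-vectors $a_0 \otimes c_0$ and $b_0 \otimes d_0$ coincide as multisets (both equal $x \otimes y \otimes z$ up to tensor reordering), so all seven nontrivial majorization inequalities $f_k(b_0 \otimes d_0) \geq f_k(a_0 \otimes c_0)$ are saturated; the gain is formally $1$, but $\ket{a_0} \to \ket{b_0}$ holds directly, so this does not count as a non-trivial catalytic transformation. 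Now introduce a small parameter $\eta > 0$ and set $a_\eta = a_0 + \eta\, w_a$ for a zero-sum direction $w_a$ chosen so that exactly one of the partial sums $f_k(a_\eta)$ is pushed strictly above the corresponding $f_k(b_0)$. This can be arranged with arbitrarily small $\eta$ because, in a suitable parameter regime, one of the slacks $f_k(b_0) - f_k(a_0)$ vanishes automatically---for instance, $f_2 = x_1$ on both sides when the sort order of $x \otimes y$ is $(x_1 y_1, x_1 y_2, x_2 y_1, x_2 y_2)$. Hence $\ket{a_\eta} \not\to \ket{b_0}$, while by continuity and Theorem~\ref{th:catalyst} the set $C_2(\ket{a_\eta}, \ket{b_0})$ remains non-empty for small $\eta$.

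The crux of the argument is then to choose small perturbations $c_\eta = z + \eta \tilde{w}_c$ and $d_\eta = y + \eta \tilde{w}_d$ (each in $\Delta_2$) such that (i) $(c_\eta)_1 \geq (d_\eta)_1$, so that $\ket{d_\eta} \to \ket{c_\eta}$, and (ii) all seven partial sum inequalities $f_k(b_0 \otimes d_\eta) \geq f_k(a_\eta \otimes c_\eta)$ hold. Since at $\eta = 0$ these are all equalities, expanding to first order in $\eta$ converts the problem into a homogeneous system of linear inequalities in the three perturbation directions $w_a, \tilde{w}_c, \tilde{w}_d$. I expect the main obstacle to lie precisely here: verifying that this linear feasibility problem admits a solution and does so without pushing $(c_\eta, d_\eta)$ outside $C_2(\ket{a_\eta}, \ket{b_0})$. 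The block structure of the Kronecker products, the symmetry between the second and third tensor factors, and the freedom guaranteed by the explicit bounds $x_{\min}, x_{\max}$ from Theorem~\ref{th:catalyst} should all be combined to exhibit such a solution.

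Once such perturbations are found, the final estimate is immediate. By additivity and smoothness of the entanglement entropy, both $E(\ket{a_\eta}) - E(\ket{b_0})$ and $E(\ket{d_\eta}) - E(\ket{c_\eta})$ equal $E(\ket{y}) - E(\ket{z}) + O(\eta)$, since at $\eta = 0$ both expressions reduce to the strictly positive quantity $E(\ket{y}) - E(\ket{z})$. Therefore the relative supercatalytic gain satisfies $G(\ket{a_\eta}, \ket{b_0}, \ket{c_\eta}, \ket{d_\eta}) = 1 - O(\eta) / (E(\ket{y}) - E(\ket{z}))$, which exceeds $1 - \delta$ as soon as $\eta$ is sufficiently small in terms of $\delta$. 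Setting $\ket{a_\delta} = \ket{a_\eta}$ and $\ket{b_\delta} = \ket{b_0}$ for this choice of $\eta$ gives the claimed family.
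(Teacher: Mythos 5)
Your overall strategy---approach a degenerate configuration where $(a\otimes c)^{\downarrow}=(b\otimes d)^{\downarrow}$ exactly (so the gain is formally $1$ but the transformation is trivially LOCC), then perturb to break direct convertibility while retaining the supercatalytic majorization---is the same idea that drives the paper's construction (there the limit point is $a_0=(1,0)\otimes(\tfrac12,\tfrac12)$, $b_0=c_0=(1,0)\otimes(1,0)$, $d_0=(\tfrac12,\tfrac12)$, a degenerate instance of your $x\otimes y$, $x\otimes z$ ansatz). However, your proposal has a genuine gap at exactly the point you flag as "the crux": you never exhibit perturbation directions $w_a,\tilde w_c,\tilde w_d$ for which the seven inequalities $f_k(b_0\otimes d_\eta)\geq f_k(a_\eta\otimes c_\eta)$ hold while $f_2(a_\eta)>f_2(b_0)$. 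For an existence theorem this witness must actually be produced, and the feasibility is not a routine consequence of the structure you cite.

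Moreover, the paper's explicit solution shows that your specific first-order ansatz is likely to fail or at least to require a substantially more delicate analysis. First, at the base point the eight sorted entries of $a_0\otimes c_0$ and $b_0\otimes d_0$ coincide with multiple ties, so each $f_k$ is non-differentiable there: the first-order change of $f_k(a_\eta\otimes c_\eta)$ is a \emph{maximum} over all active $k$-subsets, not a single linear functional, and your "homogeneous system of linear inequalities" must be formulated with these maxima (which only tightens the constraints on the $a\otimes c$ side). Second, in the paper's construction the perturbations are \emph{not} all of the same order: the states are perturbed at order $\epsilon$ while the returned catalyst must be perturbed at order $\sqrt{\epsilon}$, precisely because the supercatalytic transformation is impossible if $d$ stays within $O(\epsilon)$ of its ideal value; and one of the majorization differences ($k=2$) vanishes identically rather than being strictly positive at first order, so the sign there is decided by exact cancellation, not by a first-order margin. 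Both features indicate that a uniform $O(\eta)$ perturbation of $(a,c,d)$ may leave the linearized system infeasible, and that even when a solution exists one must track higher-order and differently-scaled terms. Your final gain estimate is fine once such a family is in hand, but as written the proof of existence is incomplete.
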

\begin{proof}
Let $\ket{a_\epsilon},\ket{b_\epsilon}\in S(\mathcal{H}_1)$ and $\ket{c_\epsilon},\ket{d_\epsilon}\in S(\mathcal{H}_2)$ be states with Schmidt vectors given by
\begin{equation}
a_\epsilon = (\frac{1}{2}, \frac{1}{2}-\epsilon, \frac{\epsilon}{2}, \frac{\epsilon}{2}), \quad b_\epsilon = (1-2\epsilon-\epsilon^2, \epsilon + \frac{\epsilon^2}{2}, \epsilon - \frac{\epsilon^2}{2}, \epsilon^2), \quad c_\epsilon = (\frac{1-2\epsilon-\epsilon^2}{1-\epsilon}, \frac{\epsilon+\epsilon^2}{1-\epsilon}),\quad d_\epsilon = (\frac{1}{2}+\sqrt{\epsilon}, \frac{1}{2}-\sqrt{\epsilon}),
\end{equation}
for $\epsilon\geq0$ sufficiently small so that all of the above are well-defined ordered Schmidt vectors. Now, using the continuity of the entanglement entropy, note that
\begin{equation}
\lim_{\epsilon\to0}G(\ket{a_\epsilon},\ket{b_\epsilon},\ket{c_\epsilon},\ket{d_\epsilon})=G(\ket{a_0},\ket{b_0},\ket{c_0},\ket{d_0})=1.
\end{equation}
As a consequence, for any $\delta>0$ we can find $\epsilon=\epsilon(\delta)>0$ sufficiently small as required above so that 
\begin{equation}
G(\ket{a_\epsilon},\ket{b_\epsilon},\ket{c_\epsilon},\ket{d_\epsilon})>1-\delta.
\end{equation} 
Thus, all is left is to prove is that these states give rise to a supercatalytic entanglement transformation. Namely, for all $\epsilon>0$ sufficiently small we have to show that: (i) $\ket{a_\epsilon}\nrightarrow\ket{b_\epsilon}$, (ii) $\ket{a_\epsilon}\ket{c_\epsilon}\rightarrow\ket{b_\epsilon}\ket{d_\epsilon}$, and (iii) $\ket{d_\epsilon}\to\ket{c_\epsilon}$ and $d_\epsilon\neq c_\epsilon$. In the following, we will no longer make the dependence of all objects on $\epsilon$ explicit in order to alleviate the notation.

Statements (i) and (iii) are relatively direct to prove. On the one hand, $f_2(a) = 1 - \epsilon$ and $f_2(b) = 1 - \epsilon - \epsilon^2/2$, so $f_2(a) > f_2(b)$ for all $\epsilon>0$. On the other hand, $SR(\d)=SR(\c)=2$ and $f_1(d)<f_1(c)$ holds for all $\epsilon>0$ sufficiently small. Thus, by Nielsen's theorem, we obtain the desired claims.

Let us then finish by proving (ii). By Taylor expanding at $\epsilon = 0$, the two coefficients of $c$ are given by
\begin{equation}
c_{1} = 1- \epsilon-2 \sum_{n=2}^{\infty}\epsilon^n,\quad c_{2}= \epsilon + 2\sum_{n=2}^{\infty}\epsilon^n.
\end{equation}
Therefore, for all $\epsilon>0$ small enough we have that
\begin{align}
&(a\otimes c)^\downarrow_1= a_1c_{1} = \frac{1}{2} - \frac{\epsilon}{2} - \sum_{n=2}^{\infty}\epsilon^n,\quad
(a\otimes c)^\downarrow_2= a_2 c_{1} = \frac{1}{2} - \frac{3}{2}\epsilon + \sum_{n=3}^{\infty}\epsilon^n,\quad 
(a\otimes c)^\downarrow_3= a_1  c_{2} =  \frac{\epsilon}{2} + \sum_{n=2}^{\infty}\epsilon^n,  \notag \\ 
&(a\otimes c)^\downarrow_4= a_2  c_{2} = \frac{\epsilon}{2} -  \sum_{n=3}^{\infty}\epsilon^n,\quad 
(a\otimes c)^\downarrow_5= a_3  c_{1} =  \frac{\epsilon}{2} - \frac{\epsilon^2}{2} - \sum_{n=3}^{\infty}\epsilon^n,\quad
(a\otimes c)^\downarrow_6= a_4  c_{1} = a_3  c_{1},  \notag \\ 
&(a\otimes c)^\downarrow_7= a_3  c_{2} =  \frac{\epsilon^2}{2} + \sum_{n=3}^{\infty}\epsilon^n, \quad
(a\otimes c)^\downarrow_8= a_4  c_{2} =  a_3  c_{2},
\end{align}
and that
\begin{align}
&(b\otimes d)^\downarrow_1=b_1  d_1 = \frac{1}{2} + \sqrt{\epsilon} - \epsilon - 2 \epsilon \sqrt{\epsilon} - \frac{\epsilon^2}{2} - \epsilon^2 \sqrt{\epsilon}, \quad 
(b\otimes d)^\downarrow_2=b_1  d_2 =  \frac{1}{2} - \sqrt{\epsilon} - \epsilon + 2 \epsilon \sqrt{\epsilon} - \frac{\epsilon^2}{2} + \epsilon^2 \sqrt{\epsilon},  \notag \\  
&(b\otimes d)^\downarrow_3=b_2  d_1 = \frac{\epsilon}{2} + \epsilon \sqrt{\epsilon} + \frac{\epsilon^2}{4} + \frac{\epsilon^2}{2}\sqrt{\epsilon},   \quad 
(b\otimes d)^\downarrow_4=b_3  d_1 =  \frac{\epsilon}{2} + \epsilon \sqrt{\epsilon}  - \frac{\epsilon^2}{4} - \frac{\epsilon^2}{2}\sqrt{\epsilon}, \notag \\ 
&(b\otimes d)^\downarrow_5=b_2  d_2 =\frac{\epsilon}{2} - \epsilon \sqrt{\epsilon}  + \frac{\epsilon^2}{4} - \frac{\epsilon^2}{2}\sqrt{\epsilon}, \quad 
(b\otimes d)^\downarrow_6=b_3  d_2 =  \frac{\epsilon}{2} - \epsilon \sqrt{\epsilon} - \frac{\epsilon^2}{4} + \frac{\epsilon^2}{2}\sqrt{\epsilon}, \notag \\ 
&(b\otimes d)^\downarrow_7=b_4  d_1 = \frac{\epsilon^2}{2} + \epsilon^2\sqrt{\epsilon},  \quad
(b\otimes d)^\downarrow_8=b_4  d_2 = \frac{\epsilon^2}{2} - \epsilon^2\sqrt{\epsilon}.
\end{align}
Hence, 
\begin{align}
& f_1(b\otimes d)-f_1(a\otimes c)=\sqrt{\epsilon}+o(\sqrt{\epsilon}), \quad f_2(b\otimes d)-f_2(a\otimes c)=0, \quad f_3(b\otimes d)-f_3(a\otimes c)=\epsilon\sqrt{\epsilon}+o(\epsilon\sqrt{\epsilon}), \notag\\
& f_4(b\otimes d)-f_4(a\otimes c)=2\epsilon\sqrt{\epsilon}+o(\epsilon\sqrt{\epsilon}),\quad f_5(b\otimes d)-f_5(a\otimes c)=\epsilon\sqrt{\epsilon}+o(\epsilon\sqrt{\epsilon}),\notag \\ 
& f_6(b\otimes d)-f_6(a\otimes c)=2\sum_{n=3}^{\infty}\epsilon^n,\quad f_7(b\otimes d)-f_7(a\otimes c)=\epsilon^2\sqrt{\epsilon}+o(\epsilon^2\sqrt{\epsilon}).
\end{align}
Thus, $b\otimes d\succ a\otimes c$ holds for all $\epsilon>0$ sufficiently small and, using Nielsen's theorem, this proves (ii).
\end{proof}

It might be illuminating to point out that, by Theorem \ref{th:catalyst}, the catalysts for the states $\a$ and $\b$ used in this proof satisfy, for $\epsilon>0$ sufficiently small,
\begin{equation} 
x_{\min}(a,b)=\frac{1+\epsilon}{2},\quad  x_{\max}(a,b)=\frac{1-2\epsilon - \epsilon^2}{1- \epsilon}.
\end{equation}
Thus, this transformation has a large range of catalysts: any state between a separable state and the maximally entangled state is a valid catalyst as $\epsilon\to0$. The borrowed state $\c$ in the proof is precisely the least entangled catalyst of Schmidt rank equal to 2 for this transformation. On the other hand, the returned state $\d$ is not the most entangled catalyst of Schmidt rank equal to 2. This is because the supercatalytic transformation happens not to be possible with such choice for $\d$. Still, the supercatalytic transformation is possible by taking a returned state that also approaches the maximally entangled state as $\epsilon\to0$ but only at a slower speed ($O(\sqrt{\epsilon})$ instead of $O(\epsilon)$). This construction illustrates the power of supercatalysis. The catalytic transformation incurs in a severe loss of entanglement as $\epsilon$ becomes small: although impossible without the aid of an auxiliary system, it takes a state close to a maximally entangled Schmidt-rank-2 state into a state that is almost separable. However, in the supercatalytic scenario, following the miserly strategy we recover most of this loss in the auxiliary system by lending a 2-qubit catalyst with almost zero entanglement and obtaining in return a 2-qubit state with almost maximal entanglement. Lastly, note that this construction does not attain $\tilde{G}_{max}(\a,\b) = 1$ because that would happen exactly at $\epsilon=0$, where it no longer holds that $\ket{a}\nrightarrow\ket{b}$.

\section{Conclusions}\label{sec5}

In this work we have introduced the entanglement gain as a figure of merit taking values in $[0,1]$ that measures the quality of supercatalytic entanglement transformations. While we have shown that the problem trivializes in the case where Scrooge has complete freedom in choosing the state he lends, in the sense that all catalytic transformations are supercatalyzable with maximal gain, the situation is much richer when this choice is constrained. In fact, we have provided several results that show that the performance of supercatalytic protocols drastically depends on the choice of borrowed state. First, we have proved that a large class of catalytic transformations are not fully supercatalyzable, i.e.\ there is always an appalling strategy in which the entanglement gain is bound to be zero. Second, we have studied in more detail the particularly relevant case of minimal supercatalytic transformations, where the Schmidt rank of the states Scrooge can lend is constrained to the minimum possible, and we have concluded that there exist instances of catalytic transformations that are not minimally supercatalyzable. Furthermore, we have shown that in certain scenarios the gain for minimal supercatalytic transformations can never be 1. This notwithstanding, we have proved that in these same scenarios there is a family of minimal supercatalytic transformations with gain as close to 1 as desired, which shows that, despite the above observation, there is no universal bound for the optimal gain in minimal supercatalytic transformations. In addition to this, we have performed an exhaustive study in the simplest scenario and we have provided a state-dependent upper bound on the gain. We have found that a miserly strategy is often optimal and that our bound can be tight (meaning that the least entangled catalyst can be lent in exchange for the maximally entangled catalyst). However, this is not always true and, actually, we have identified examples where an intermediate strategy greatly outperforms the miserly strategy.  

Supercatalytic transformations offer a clear enhancement over merely catalytic ones. However, while supercatalysis was introduced more than 20 years ago in \cite{BandRoy}, it has remained largely unexplored. We hope that the present article stimulates further research in this area. In particular, given that catalysis has many applications and provides an advantage in several protocols, it would be interesting to find specific information-processing tasks where this advantage is boosted by supercatalytic strategies. From the technical side, our work leads to several interesting open questions. Is there an example of a fully supercatalyzable transformation or are these impossible in general? Our impossibility proofs rely on the fact that under certain hypotheses one cannot borrow a most entangled catalyst of Schmidt rank less than or equal to $r$ and perform a supercatalytic transformation returning a state of Schmidt rank larger than $r$. Is this ever possible? On a different note, can we find tighter state-dependent upper bounds on $G_{max}(\a,\b,\c)$ and $\tilde{G}_{max}(\a,\b)$ than the one given in Theorem \ref{th:bound}, even if only in the simplest scenario (see\ Eqs.\ (\ref{minbound}) and (\ref{optminbound}))? Also, even though the miserly strategy seems to perform well in general, we have found instances where it leads to zero gain. Can one find a general strategy that is guaranteed to perform well always (i.e.\ with a desired lower bound on the gain)?

\begin{acknowledgments}
We acknowledge financial support from the Spanish Ministerio de Ciencia, Innovaci\'on y Universidades (grant PID2023-146758NB-I00 and grant PID2024-160539NB-I00 funded by MCIN/AEI/10.13039/501100011033). J. I. de V. also acknowledges financial support from the Spanish Ministerio de Ciencia, Innovaci\'on y Universidades (``Severo Ochoa Programme for Centres of Excellence'' grant CEX2023-001347-S funded by MCIN/AEI/10.13039/501100011033) and from Comunidad de Madrid (grant QUITEMAD-CM TEC-2024/COM-84).
\end{acknowledgments}

\bibliography{supercatbib}

\end{document}